\newif\ifdraft
\DeclareMathAlphabet{\altmathcal}{OMS}{cmsy}{m}{n}
\newtheorem{theorem}{Theorem} %[section]
\newtheorem{proposition}[theorem]{Proposition}
\newtheorem{lemma}[theorem]{Lemma}
\newtheorem{example}{Example}
\theoremstyle{nonumberplain}
\newtheorem{proof}{Proof}
\newcommand{\nobrackettag}[0]{\def\tagform@##1{\maketag@@@{##1}}}
\newcommand{\citeasnoun}[1]{\citeauthor{#1}~\shortcite{#1}}
\newcommand{\myparagraph}[1]{\smallskip\noindent\textbf{#1}~}
\newcommand{\A}{\mathcal{A}}
\newcommand{\B}{\mathcal{B}}
\newcommand{\C}{\mathcal{C}}
\newcommand{\I}{\mathcal{I}}
\newcommand{\F}{\mathcal{F}}
\newcommand{\G}{\mathcal{G}}
\renewcommand{\H}{\mathcal{H}}
\newcommand{\Q}{\mathcal{Q}}
\newcommand{\K}{\mathcal{K}}
\newcommand{\N}{\mathcal{N}}
\renewcommand{\P}{\mathcal{P}}
\newcommand{\R}{\mathcal{R}}
\newcommand{\T}{\mathcal{T}}
\newcommand{\tu}{\mathbf{t}}
\newcommand{\bc}{\mathbf{c}}
\newcommand{\s}{\mathbf{s}}
\renewcommand{\t}{\mathbf{t}}
\newcommand{\x}{\mathbf{x}}
\newcommand{\y}{\mathbf{y}}
\newcommand{\z}{\mathbf{z}}
\newcommand{\qq}{\mathfrak{q}}
\newcommand{\ex}[1]{\ensuremath{\mathsf{#1}}}
\newcommand{\exc}[1]{\ensuremath{\mathsf{#1}}}
\newcommand{\tup}[1]{\langle #1\rangle}            % tuple
\newcommand{\dom}[1][\I]{\Delta^{#1}}
\newcommand{\dllite}{\textit{DL-Lite}\xspace}
\newcommand{\dllitepos}{\textit{DL-Lite}\ensuremath{_{\mathit{pos}}}\xspace}
\newcommand{\dlliteposH}{\textit{DL-Lite}\ensuremath{_{\mathit{pos}}^{\H}}\xspace}
\newcommand{\dlliteposHmin}{\textit{DL-Lite}\ensuremath{_{\mathit{pos}}^{\H^-}}\xspace}
\newcommand{\dlliteposHminNmin}{\textit{DL-Lite}\ensuremath{_{\mathit{pos}}^{\H^{\text{--}}\!\N^{\text{--}}}}\xspace}
\newcommand{\dllitecore}{\textit{DL-Lite}\ensuremath{_{\mathit{core}}}\xspace}
\newcommand{\dllitecoreH}{\textit{DL-Lite}\ensuremath{_{\mathit{core}}^{\H}}\xspace}
\newcommand{\dllitecoreNmin}{\textit{DL-Lite}\ensuremath{_{\mathit{core}}^{\N^{\text{--}}}}\xspace}
\newcommand{\dllitecoreHNmin}{\textit{DL-Lite}\ensuremath{_{\mathit{core}}^{\H\N^{\text{--}}}}\xspace}
\newcommand{\dllitecoreHminNmin}{\textit{DL-Lite}\ensuremath{_{\mathit{core}}^{\H^{\text{--}}\!\N^{\text{--}}}}\xspace}
\newcommand{\dlliterb}{\textit{DL-Lite}\ensuremath{_{\R^-}^{\texttt{b}}}\xspace}
\newcommand{\dllitefb}{\textit{DL-Lite}\ensuremath{_{\F}^{\texttt{b}}}\xspace}
\newcommand{\owlql}{OWL\,2\,QL\xspace}
\newcommand{\LOGSPACE}{\textsc{LogSpace}\xspace}
\newcommand{\PTIME}{\textsc{P}\xspace}
\newcommand{\coNP}{co\textsc{NP}\xspace}
\newcommand{\ISA}{\sqsubseteq}
\newcommand{\GEQ}[2]{\mathop{\geq_{#1}}\!#2}
\newcommand{\NOT}{\lnot}
\DeclareMathOperator{\aggCount}{\mathsf{cnt}}
\DeclareMathOperator{\card}{\mathsf{card}}
\DeclareMathOperator{\domain}{\mathsf{dom}}
\DeclareMathOperator{\range}{\mathsf{range}}
\DeclareMathOperator{\inter}{\mathsf{inter}}
\DeclareMathOperator{\rdef}{\leftarrow}
\DeclareMathOperator{\ans}{\mathsf{ans}}
\DeclareMathOperator{\body}{\mathsf{body}}
\DeclareMathOperator{\head}{\mathsf{head}}
\DeclareMathOperator{\dist}{\mathsf{dist}}
\DeclareMathOperator{\matches}{\mathsf{match}}
\DeclareMathOperator{\certCard}{\mathsf{certCard}}
\DeclareMathOperator{\certAns}{\mathsf{certAns}}
\DeclareMathOperator{\partition}{\mathsf{part}}
\DeclareMathOperator{\atDec}{\mathsf{ad}}
\DeclareMathOperator{\qh}{\mathsf{qh}}
\DeclareMathOperator{\rpl}{\mathsf{rpl}}
\newcommand{\CQ}{$\mathsf{CQ}$\xspace}
\newcommand{\CQA}{$\mathsf{CQ^{\mathsf{A}}}$\xspace} %acyclic 
\newcommand{\CQC}{$\mathsf{CQ^{\mathsf{C}}}$\xspace} % connected
\newcommand{\CQL}{$\mathsf{CQ^{\mathsf{L}}}$\xspace} % linear (was NCQ)
\newcommand{\CQAC}{$\mathsf{CQ^{\mathsf{AC}}}$\xspace}
\newcommand{\CQAL}{$\mathsf{CQ^{\mathsf{AL}}}$\xspace} % was ANCQ
\newcommand{\CQCL}{$\mathsf{CQ^{\mathsf{CL}}}$\xspace} % was CNCQ
\newcommand{\AQ}{$\mathsf{AQ}$\xspace}
\newcommand{\CQR}{$\mathsf{CQ^{\mathsf{R}}}$\xspace} % was GROUND
\newcommand{\CQCR}{$\mathsf{CQ^{\mathsf{CR}}}$\xspace} % At least one distinguished variable, connected
\newcommand{\CQCLR}{$\mathsf{CQ^{\mathsf{CLR}}}$\xspace} % At least one distinguished variable, connected
\newcommand{\can}{\I^\K_\mathit{can}}
\newcommand{\chase}{\textit{ch}}
\newcommand{\countP}{\textsc{Count}\xspace}
\newcommand{\targetLanguage}{FO(\textsc{Count})\xspace}
\newcommand{\comb}{\mathsf{cp}\xspace}
\newcommand{\setB}[2]{\left\{ #1 \mathrel{}\middle|\mathrel{} #2\right\}}
\newcommand{\sat}{\mathsf{subc}}
\newcommand{\perfectrefcount}{$\mathsf{PerfectRef}_{\mathsf{cnt}}$\xspace}
\newcommand{\perfectref}{$\mathsf{PerfectRef}$\xspace}
\newcommand{\atomrewrite}{\textit{AtomRewrite}\xspace}
\newcommand{\reduce}{\textit{Reduce}\xspace}
\newcommand{\gealpha}{\textit{GE$_\alpha$}\xspace}
\newcommand{\gebeta}{\textit{GE$_\beta$}\xspace}
\newcommand{\sql}{\textsf{SQL}\xspace}
\newcommand{\nr}[1]{\textcolor{blue}{#1}}
\newcommand{\ir}[1]{\textcolor{ForestGreen}{#1}}
\newcommand{\kr}[1]{#1}
\newcommand{\ignore}[1]{}
\newcommand{\NR}{\mathsf{N}_\mathsf{R}}
\newcommand{\NC}{\mathsf{N}_\mathsf{C}}
\newcommand{\NI}{\mathsf{N}_\mathsf{I}}
\newcommand{\NE}{\mathsf{N}_\mathsf{E}}
\newcommand{\NV}{\mathsf{N}_\mathsf{V}}
\newcommand{\ei}{\emph{(i)~}\xspace}
\newcommand{\eii}{\emph{(ii)~}\xspace}
\newcommand{\juliencolor}{red!70!black}
\newcommand{\diegocolor}{magenta!90!black}
\newcounter{todocntr}
\newenvironment{todoenv}
  {\addtocounter{todocntr}{1}\par\bigskip\noindent
   \color{blue}\fbox{\thetodocntr}\hspace*{\fill}~\bf
   \begin{minipage}[t]{0.95\linewidth}}
  {\end{minipage}\hspace*{\fill}\bigskip}
\newcommand{\todo}[1]{\begin{todoenv}{#1}\end{todoenv}}
\newcommand{\td}[1]{\noindent{\color{red}\framebox[\linewidth]{\parbox{0.95\linewidth}{\bf #1}}}}
\newcommand{\boxcomment}[3]{~\\{\color{#2}{\fbox{\parbox{\linewidth}{{#1}:~#3}}}~\\}}
  \newcommand{\nb}[2]{%
   \makebox[0cm][r]{\textcolor{#2}{\bf!}}%
   \marginnote%[{\scriptsize\textcolor{#2}{#1}}]%
     {{\scriptsize\textcolor{#2}{#1}}}%
   }
  \newcommand{\inlinecomment}[3]{
   \marginnote[\textcolor{#2}{#1}]{\textcolor{#2}{#1}}\textcolor{#2}{#3}%
  }
  \newcommand{\nb}[2]{\makebox[0cm][r]{\textcolor{#2}{\bf!}}%
   \marginpar[\parbox{\marginparwidth}{\scriptsize\textcolor{#2}{\raggedleft #1}}]%
   {\parbox{\marginparwidth}{\scriptsize\textcolor{#2}{\raggedright #1}}}}
  \newcommand{\inlinecomment}[3]{
   \marginpar[\textcolor{#2}{#1}]{\textcolor{#2}{#1}}\textcolor{#2}{#3}%
  }
\newcommand{\todo}[1]{}
\newcommand{\td}[1]{}
\newcommand{\nb}[2]{}
\newcommand{\inlinecomment}[3]{}
\newcommand{\boxcomment}[3]{}
\newcommand{\dc}[1]{\nb{DC:~#1}{\diegocolor}}
\newcommand{\jc}[1]{\nb{JC:~#1}{\juliencolor}}
\begin{document}

\title{Counting Query Answers over a \dllite Knowledge Base (extended version)}

\author{%Paper number: 7509
  Diego Calvanese$^{1,2}$ \and
  Julien Corman$^1$\and
  Davide Lanti$^1$ \textnormal{and}
  Simon Razniewski$^3$\\
  \affiliations
  $^1$\,Free University of Bozen-Bolzano, Italy\\
  $^2$\,Ume\aa\ University, Sweden\\
  $^3$\,Max-Planck-Institut f\"ur Informatik, Germany\\
  \emails
  \{calvanese, corman, lanti\}@inf.unibz.it,~ srazniew@mpi-inf.mpg.de
}
\maketitle

\ifdraft
\fi

\begin{abstract}
  Counting answers to a query
  is an operation
  supported by virtually all database management systems.
  In this paper we focus on counting answers over a Knowledge Base
  (KB), which may be viewed as a database enriched with background knowledge
  about the domain under consideration.
  In particular, we place our work in the context of Ontology-Mediated Query Answering/Ontology-based Data Access (OMQA/OBDA),
  where the language used for the ontology is a member of the \dllite family and the data is a (usually virtual) \emph{set} of assertions.
  % In such a context
  We study the \emph{data complexity} of query answering,
  for different members of the \dllite family that include number restrictions,
  and for variants of \emph{conjunctive queries with counting} that differ with respect to their shape (connected,
  branching,
  rooted).
  We improve upon existing results by providing \PTIME and \coNP lower bounds,
  and upper bounds in \PTIME and \LOGSPACE.
  For the \LOGSPACE case, we have devised a novel query rewriting technique
  into first-order logic with counting.
\end{abstract}

\section{Introduction}
\label{sec:introduction}

Counting answers to a query is an essential operation in data management,
and is supported 
% provided
by virtually every database management system.
%(DB) query language.
In this paper, we focus on counting answers over a Knowledge Base (KB),
which may be viewed as a database (DB) enriched with background knowledge about the
domain of interest.
In such a setting,
%where data can be incomplete,
counting may take into account
%consider
two types of information:
grounded assertions (typically DB records),
and existentially quantified
% instances
statements (typically statistics).

% \begin{figure}[b]
%   \centering
%   $\begin{array}{l}
%     \texttt{hasChild}(\ex{Kendall},\ex{Alice})\\
%     \texttt{hasChild}(\ex{Jordan},\ex{Alice})\\
%     \texttt{hasChild}(\ex{Parker},\ex{Bob})\\
%     \texttt{hasChild}(\ex{Parker},\ex{Carol})\\
%   \end{array}
%   \qquad
%   \begin{array}{l}
%     % \ex{hasChild}(\ex{Parker},\exists 5)\\
%     % \ex{hasChild}(\ex{Parker},\exists 4)\\
%     % \ex{hasChild}(\ex{Jordan},\exists 3)\\
%     \textit{"Kendall has 2 children"}\\
%     \textit{"Parker has 3 children"}\\
%     \textit{"A child has at most}\\
%     \qquad\textit{2 parents"}
%     % \textit{"Jordan has 3 children"}
%   \end{array}$
%   \caption{Example of mixed data and existential quantifications.}
%   \label{fig:ex1}
% \end{figure}

As a toy example,
% Figure~\ref{fig:ex1} provides
the following is an imaginary KB storing a parent/child relation,
where explicit instances (e.g., Alice is the child of Kendall) coexist with
existentially quantified ones (e.g., Parker has 3 children):
\[
  \begin{array}{l}
    \ex{hasChild}(\exc{Kendall},\exc{Alice})\\
    \ex{hasChild}(\exc{Jordan},\exc{Alice})\\
    \ex{hasChild}(\exc{Parker},\exc{Bob})\\
    \ex{hasChild}(\exc{Parker},\exc{Carol})\\
  \end{array}
  \qquad
  \begin{array}{l}
    % \ex{hasChild}(\ex{Parker},\exists 5)\\
    % \ex{hasChild}(\ex{Parker},\exists 4)\\
    % \ex{hasChild}(\ex{Jordan},\exists 3)\\
    \textit{"Kendall has 2 children"}\\
    \textit{"Parker has 3 children"}\\
    \textit{"A child has at most}\\
    \qquad\textit{2 parents"}
    % \textit{"Fordan has 3 children"}
  \end{array}
\]
The presence of both types of information is common
% a common scenario
when integrating multiple
% in knowledge representation,
%especially with multiple
data sources.  One source may provide detailed records (e.g., one record per
purchase, medical visit,
% car accidents,
etc.),
%usually a large number,
whereas another source may only provide statistics (number of purchases, of
visits,
% number of car accidents,
etc.), due to anonymization, access
restriction, or simply because the data is recorded in this way.
%The coexistence of these two types of information can even be observed within
%single sources such as DBpedia and Wikidata, where existential information
%about administrative subdivisions, artwork ensembles, or family relations
%coexists with ground fact mentions~\cite{MRDW18}.

In such scenarios,
counting answers to a query over a KB may require operations that go beyond counting records.
% For instance, in Figure~\ref{fig:ex1},
E.g., in our example,
counting the minimal number of children that must exist according to the KB
(where children can be explicit or existentially quantified elements in the
range of \ex{hasChild}) requires some non-trivial reasoning.
The answer is 4:
Bob or Carol may be the second child of Kendall,
but Alice cannot be the third child of Parker (because Alice has two parents already),
so a fourth child must exist.
%\julienbox{I expanded a little bit over Davide's explanation here, because the example was not understood by the reviewers (to be fair with the reviewers, the example was also incorrect).}
%\julienbox{For readability, we could distinguish parent names from children names.
%For instance actual first names for parents (and gender-neutral, as Simon proposed, e.g. Kendall, Parker and Jordan) VS fictional characters for children (e.g. Calvin, Hobbes and Baby Hermann).
%  An opinion? [CHECK]
%}

One of the most extensively studied frameworks for query answering over a KB is
Ontology Mediated Query Answering (OMQA)\footnote{Also referred to as OBDA (for
 Ontology Based Data Access),
 % in a data integration setting
 when emphasis in placed on mappings connecting external data sources
 to a TBox \cite{XCKL*18}.} \cite{CaDL08,BiOr15}. In OMQA, the
background knowledge takes the form of a set of logical statements, called the
\emph{TBox}, and the records are a set of facts, called the
\emph{ABox}.  TBoxes are in general expressed in
Description Logics (DLs), which are decidable fragments of First-Order
logic that typically can express the combination of explicit and
existentially quantified instances mentioned above.  Therefore OMQA
may provide valuable insight about the computational problem of
counting over such data (even though, in practice, DLs may not be the
most straightforward way to represent such data).
%\julienbox{This addition explains why we use DLs in the first place.
%This is not obvious from the beginning of the introduction (in real life, one would store statistics in a table, not as DL axioms.)} [CHECK]

For Conjunctive Queries (CQs) and Unions of CQs (UCQs), DLs have been
identified with the remarkable property that query answering over a KB
does not induce extra computational cost (w.r.t.\ worst-case complexity),
when compared to query answering over a relational DB
\cite{XCKL*18}.  This key property has led to the development of
numerous techniques that leverage the mature technology of relational DBs to perform query answering over a
KB.  In particular, the \dllite family
% of DLs
\cite{CDLLR07,ACKZ09} has been widely studied and adopted in OMQA/OBDA systems,
resulting in the \owlql standard \cite{W3Crec-OWL2-Profiles}.

Yet the problem of \emph{counting} answers over a \dllite KB has seen relatively
little interest in the literature. In particular,
whether counting answers exhibits desirable computational properties analogous
to query answering is still a partly
% largely
open question for such DLs.
A key result for counting over \dllite KBs was provided by
\citeasnoun{KoRe15}, who also
% first
formalized the semantics we adopt
in this paper (which we call \emph{count semantics}).  For
CQs interpreted under count semantics, they show
a \coNP lower bound in \emph{data complexity}, i.e., considering that
the sizes of the query and TBox are fixed.
%(in
%comparison, counting answers over a relational DB in known to be in
%\LOGSPACE). 
% by \cite{KoRe15}
% The reduction used to prove \coNP-hardness though
However, their reduction relies on a CQ that
% is not only cyclic, but more importantly
computes the cross-product of two relations,
% (i.e., a join without joining condition). Such a query
which is unlikely to occur in practice.
%since one would rather split the query in two.
Later on, it was shown\footnote{The result was stated for
  the related setting of \emph{bag semantics}, but the same
  reduction holds for count semantics as well.} by
\citeasnoun{NKKK*19} that \coNP-hardness still holds (for a more expressive
DL) using a branching and cyclic CQ without cross-product.
%\dl{I have added a footnote. Please check if we need to say
%  that it can easily be adapted, or if it already is a reduction also
%  for count.}\dc{It is in any case fine as stated in the footnote.}%
Building upon these results, we
further investigate how query shape affects tractability.
%In this paper, we further investigate the incidence of the query shape on tractability.
% with the interesting findings that branching or disconnectedness alone
% are sufficient for intractability, whereas counting becomes tractable
% for CQs that are linear and connected, regardless of cyclicity.

%Beyond tractability,
Another important question is whether relational DB technologies may
be leveraged for counting in OMQA, as done for boolean and enumeration
queries.  A key property here is \emph{rewritability}, extensively
studied for \dllite and UCQs \cite{CDLLR07}, i.e., the fact that a query over a KB may
be rewritten as an equivalent UCQ over its ABox only, intuitively
``compiling'' part of the TBox into this new UCQ.
An important
%related
result in this direction was provided by \citeasnoun{NKKK*19}, but in the
context of query answering under \emph{bag semantics}.  For certain
\dllite variants, it is shown that queries that are \emph{rooted}
(i.e., with at least one constant or answer variable) can be rewritten
as queries over the ABox. Despite there being a correspondence between
bag semantics and count semantics, they show that these results do
not automatically carry over to query answering under count
semantics, due the way bag answers are computed in the presence of a
KB.

% Key results in this direction were provided by \cite{NKKK*19}, showing
% that counting queries that are \emph{rooted} (i.e., with at least one
% constant or answer variable) can be rewritten as 
% queries over the ABox, for certain \dllite variants.  An
% open question though is whether this result extends to variants of
% \dllite that can encode the statistics mentioned above.
%i.e., that
%include number restrictions \cite{ACKZ09}.

So in this work, we further investigate the boundaries of tractability and
rewritability for CQs with counting over a \dllite KB, with an
emphasis on DLs that can express statistics about missing information.
% \jc{Shall we drop or
%   shorten the following paragraph?}  We note that number restrictions
% may not be the most conventional way to represent such statistics
% (arguably, relational tables with numbers are more realistic than DL
% axioms).  But since we focus on computational properties, it suffices
% for our purpose that they can be expressed in DLs: this allows us to
% exploit the vast literature on OMQA, whereas our
% complexity/rewritability results still hold for settings where such
% statistics are encoded in a more conventional way.
%
As is common for DBs,
we focus on \emph{data complexity},
i.e., computational cost in the size of the ABox (likely to grow much
% orders of magnitude
faster than the query or TBox).

% Section~\ref{sec:preliminaries} formalizes the problem and defines key
% notions;
% % for its study,
% Section~\ref{sec:sota} summarizes related work;
% Section~\ref{sec:tractability} presents our results on tractability, and
% Section~\ref{sec:rewritability} addresses rewritability;
% Section~\ref{sec:conclusion} discusses implications of these results,
% and possible continuations.
%
Due to space limitations, the techniques used to obtain our results are only
sketched, but full proofs are available in the extended version of this
paper \cite{CCLR-2020-arxiv}.

\section{Preliminaries and Problem Specification}
\label{sec:preliminaries}

% \subsection{Notation}

We assume mutually disjoint sets $\NI$ of \emph{individuals} (a.k.a.\
\emph{constants}), $\NE$ of \emph{anonymous individuals} (induced by
existential quantification),
% $\exists$),
$\NV$ of \emph{variables},
$\NC$ of \emph{concept names} (i.e., unary predicates,
denoted with $A$),
% in the following, possibly with subscripts or superscripts),
and $\NR$ of \emph{role names} (i.e., binary predicates,
denoted with $P$).
% We also fix the sets $\const \subseteq \NI$ of constants and
% $\pred = \NC \cup \NR$ of (unary and binary) predicate names.
%
In the following, boldface letters, e.g., $\bc$, denote tuples, and
% when convenient,
we treat tuples as sets.

\paragraph{Functions, Atoms.}

$\domain(f)$ and $\range(f)$ denote the domain and range of a function
$f$.  Given $D \subseteq \domain(f)$, the function $f$ restricted to
the elements in $D$ is denoted $f|_D$.  A function $f$ is
\emph{constant-preserving} iff $c=f(c)$ for each
$c\in\domain(f)\cap\NI$.  If $S\subseteq\domain(f)$, we use $f(S)$ for
$\{f(s) \mid s \in S\}$.  If $\tu=\tup{t_1,\ldots,t_n}$ is is a tuple
with elements in $\domain(f)$, we use $f(\tu)$ for
$\tup{f(t_1), \ldots,f(t_n)}$.
% If $T$ is a set of tuples, we use $\args(T)$ to denote the union of all elements appearing in tuples of $T$.

% An atom $a$ is an expression $p(\bc)$, where $p\in\NC\cup\NR$ and
% $\bc\subseteq\NI$.
An atom $a$ has
% is an expression of
the form $A(s)$ or $P(s,t)$, with $A \in \NC$,
$P \in \NR$, and $s,t \in \NI \cup \NE \cup \NV$.

\paragraph{Interpretations, Homomorphisms.}

An \emph{interpretation} $\I$ is a FO structure
$\tup{\dom, \cdot^\I}$, where the \emph{domain} $\dom$ is a
non-empty subset of $\NI \cup \NE$, and the
\emph{interpretation function} $\cdot^\I$ is a function that maps each
constant $c \in \NI$ to itself (i.e., $c^\I = c$, in other words, we adopt
the \emph{standard names assumption}),
each concept name $A \in \NC$ to a set $A^\I \subseteq \dom$, and each role name
$P \in \NR$ to a binary relation $P^\I \subseteq \dom \times \dom$.

Given an interpretation $\I$ and a constant-preserving function $f$ with domain
$\dom$, we use $f(\I)$ to denote the interpretation defined by
$\dom[f(\I)] = f(\dom)$ and $E^{f(\I)} = f(E^{\I})$ for each
$E\in$ $\NC \cup \NR$.
Given two interpretations $\I_1$, $\I_2$, we use $\I_1 \subseteq \I_2$ as a
shortcut for $\dom[\I_1]\subseteq \dom[\I_2]$ and
$E^{\I_1} \subseteq E^{\I_2}$, for each $E \in \NC \cup \NR$.  A
\emph{homomorphism} $h$ from $\I_1$ to $\I_2$ is a constant-preserving function
with domain $\dom[\I_1]$ that verifies $h(\I_1) \subseteq \I_2$.  We note that
a set $S$ of atoms with arguments in $\NI \cup \NE$ uniquely
identifies an interpretation, which we denote with $\inter(S)$.
% , and vice-versa. Davide> This could be re-put in.

\paragraph{KBs, DLs, Models.}

\begin{figure}[tbp]
  \centering
  $R \longrightarrow P \mid P^- \qquad  B \longrightarrow A \mid \GEQ{1}{R}
  \qquad C \longrightarrow A \mid \GEQ{n}{R}$
  \caption{Syntax of \dllitecore \emph{roles} $R$, \emph{basic concepts} $B$,
   and \emph{concepts} $C$, where $n$ denotes a positive integer, i.e.,
   $n\in\mathbb{N}^+$.}
  \label{fig:dllite}
\end{figure}
A KB is a pair $\K=\tup{\T,\A}$, where $\A$, called \emph{ABox},
is a finite set
of atoms with arguments in $\NI$, and $\T$, called \emph{TBox}, is a finite set of \emph{axioms}.  We
consider DLs of the \dllite family~\cite{ACKZ09},
starting with the logic \dllitecore,
where each axiom has one
of the forms
\begin{inparaenum}[\itshape (i)]
\item $B\ISA C$ (concept inclusion),
\item $B\ISA\NOT C $ (concept disjointness), or
\item $R\ISA R'$ (role inclusion),
\end{inparaenum}
% where the expressions $R$, $R'$, $B$, and $B'$ are are constructed according
% to
where now and in the following,
the symbols $R$, $B$, and $C$ are defined according to the grammar of
Figure~\ref{fig:dllite},
and are called respectively \emph{roles}, \emph{basic concepts}, and
\emph{concepts}.
% We also use $\SOMET{R}$ as an abbreviation for $\GEQ{1}{R}$.
Concepts of the form $\GEQ{n}{R}$ are called \emph{number restrictions}.
%, which we call "existential restriction".}
\dllitepos allows only for axioms of form~\textit{(i)}, with the
requirement that the number $n$ in number restrictions may only
be~1. In this work we study extensions to this logic along three
orthogonal directions:
\begin{inparaenum}[\itshape (1)]
\item allowing also for axioms of form~\textit{(ii)},
  indicated by replacing the subscript $_{\mathit{pos}}$ with $_{\mathit{core}}$;
\item allowing also for axioms of form~\textit{(iii)}, indicated by adding a
  superscript~$^{\H}$;
\item allowing for arbitrary numbers in number restrictions,
  but only on the right-hand-side (RHS) of concept inclusion, indicated by adding a
  superscript~$^{\N^\text{--}}$.
\end{inparaenum}
We also use the superscript $^{\H^-}$ for logics with role inclusions,
but with the restriction on TBoxes defined by \citeasnoun{NKKK*19},
which disallows in a TBox $\T$ axioms of the form $B \sqsubseteq\ \GEQ{n}{R_1}$ if $\T$ contains a role inclusion $R_1 \sqsubseteq R_2$,
for some $R_2 \neq R_1$.
% restrictions~$^{\N}$, we need to impose a syntactic restriction on the
% interaction between these two constructs (see \cite{ACKZ09}), which is
% indicated in the name by using superscript~$^{(\H\N)}$ instead of~$^{\H\N}$.

The semantics of DL constructs
% and KBs
is specified as usual \cite{BCMNP03}.
% The \emph{active domain} $\adom(\K)$ of $\K=\tup{\T,\A}$ is the union of all
% arguments of all atoms in $\A$.
An interpretation $\I$ is a \emph{model} of $\tup{\T,\A}$ iff
$\inter(\A)\subseteq\I$, and $E_1^\I \subseteq E_2^\I$ holds for each axiom
$E_1 \sqsubseteq E_2$ in $\T$.
A KB is \emph{satisfiable} iff it admits at least one model.  For readability,
% since the number of answers to a query over an unsatisfiable KB is trivially
% infinite,
in what follows we focus on satisfiable KBs, that is, we use ``a KB'' as a
shortcut for ``a satisfiable KB''.  We use the binary relation $\sqsubseteq_\T$
over \dllite concepts and roles $E_1$, $E_2$ to denote entailment w.r.t.\ a TBox
$\T$, defined by $E_1\sqsubseteq_\T E_2$ iff $E_1^\I \subseteq E_2^\I$ for
each model $\I$ of the KB $\tup{\T,\emptyset}$.

A key property of a \dllite KB $\K$ is the existence of a so-called
\emph{canonical model} $\can$, unique up to isomorphism, s.t.\ there exists a
homomorphism from $\can$ to each model of $\K$.  This model can be constructed
via the \emph{restricted chase} procedure by \citeasnoun{CDLLR13},
\citeasnoun{BoAC10}.

Finally, we observe that axioms of the form $B \sqsubseteq \GEQ{n}{R}$ can be
expressed in the logic \dllitecoreH, but with a possibly exponential blowup of
the TBox (assuming $n$ is encoded in binary).  For instance, the axiom
$A \sqsubseteq \GEQ{2}{P}$ can be expressed as
$\{ A \sqsubseteq \exists P_1, A \sqsubseteq \exists P_2, P_1 \sqsubseteq P,
P_2 \sqsubseteq P, \exists P^-_1 \sqsubseteq \neg \exists P^-_2 \}$, with
$P_1$, $P_2$ fresh DL roles.
  
% This canonical model is
% relevant because many reasoning tasks (including query answering) over the
% (possibly infinite models of the) KB, can be reduced to a reasoning task over
% this model \cite{CDLLR07}.
%% TODO: Maybe put it back?
% Then deciding whether a tuple of constants $\tu$ is an \emph{answer} to a CQ
% $q$ (without counting) over $\K$ under the so-called \emph{certain answers}
% semantics amounts to deciding whether $\tu$ is an answer to $q$ over $\can$.
% As will be discussed in Section XXX though, this does not hold anymore for
% CQs with counting.

\paragraph{CQs.}

A \emph{Conjunctive Query} (CQ) $q$ is an expression of the form
$q(\x) \rdef p_1(\tu_1), \ldots, p_n(\tu_n)$,
where each $p_i\in\NC\cup\NR$,
% $\x$ is a tuple over $\NV$, and each $\tu_i$ is a tuple over $\NV\cup\NI$.
$\x\subseteq\NV$, each $\tu_i\subseteq\NV\cup\NI$, and $p_1(\tu_1), \ldots,
p_n(\tu_n)$ is syntactic sugar for the duplicate-free conjunction of atoms
$p_1(\tu_1) \land \cdots \land p_n(\tu_n)$.
Since all conjunctions in this work are duplicate-free, we sometimes treat
them as sets of atoms.
% Abusing notation, we may use $\x$ (resp. $\tu_i$) below to designate the
% elements of $\x$ (resp. $\tu_i$) viewed as a set.
%
The variables in $\x$, called \emph{distinguished}, are denoted by
$\dist(q)$, $\head(q)$ denotes the \emph{head} $q(\x)$ of $q$, and
$\body(q)$ denotes the \emph{body}
$\{p_1(\tu_1), \ldots, p_n(\tu_n)\}$ of $q$.  We require safeness,
i.e., $\x \subseteq \tu_1 \cup\cdots\cup \tu_n$.  A query is
\emph{boolean} if $\x$ is the empty tuple.

\paragraph{Answers, Certain Answers.}

To define query answers under count semantics, we adapt the definitions by
\citeasnoun{CoNS07} and \citeasnoun{KoRe15}.
% \footnote{We can simplify these definitions here, because the only
% aggregation function that we are concerned with is counting.}
A \emph{match} for a query $q$ in an interpretation $\I$ is a
homomorphism from $\body(q)$ to $\I$.  Then, an \emph{answer} to $q$
over $\I$ is a pair $\tup{\omega,k}$ s.t.\ $k \ge 1$, and there are
\emph{exactly} $k$ matches $\rho_1,\ldots,\rho_k$ for $q$ in $\I$ that
verify $\omega = \rho_i|_{\dist(q)}$ for $i \in \{1,\ldots,k\}$.  We
use $\ans(q,\I)$ to denote the set of answers to $q$ over $\I$.
Similarly, if $\Q$ is a set of queries, we use $\ans(\Q,\I)$ to
denote the set of all pairs $\tup{\omega,\ell}$ s.t.\
$\tup{\omega,k} \in \ans(q,\I)$ for some $k$ and $q \in \Q$, and
$\ell= \sum \{k \mid \tup{\omega,k} \in \ans(q,\I), q \in \Q\}$.
Answering a query over an interpretation (i.e., a DB) is
also known as \emph{query evaluation}.  Finally, a pair
$\tup{\omega,k}$ is a \emph{certain answer} to a query $q$ over a KB
$\K$ iff $k \ge 1$ is the \emph{largest} integer such that, 
% $k$ is the \emph{smallest} integer that verifies
% $\tup{\omega,k} \in \ans(q,\I)$ for each model $\I$ of $\K$.
for each model $\I$ of $\K$, $\tup{\omega,k_{\I}} \in \ans(q,\I)$ for some $k_{\I}\ge k$.
We use $\certAns(q,\K)$ to denote the set of certain answers to $q$ over~$\K$.

\paragraph{Decision Problem.}

The decision problem defined by \citeasnoun{KoRe15} takes as input a query $q$,
a mapping $\omega$, a KB $\K$, and an integer $k$,
and decides $\tup{\omega, k} \in \certAns(q,\K)$.
It is easy to see that an instance of this problem can be reduced (in linear
time) to an instance where $q$ is a boolean query and $\omega$ is the empty
mapping, by introducing constants in $\body(q)$.
We will use the following simplified setting for the complexity results below:
if $q$ is a boolean query and $\varepsilon$ the empty mapping,
we use $k = \certCard(q,\K)$ as an abbreviation for $\tup{\varepsilon,k} \in
\certAns(q,\K)$. Then, the problem \countP is stated as follows:

\smallskip
\noindent
\framebox[\columnwidth]{\begin{tabular}{ll@{~}l@{~}}
   \countP
   & \textbf{Input}: & \dllite KB $\K$, boolean CQ $q$, $k \in \mathbb{N}^+$\\
   & \textbf{Decide}: & $k = \certCard(q, \K)$
 \end{tabular}}

\paragraph{Data complexity.}

As usual for query answering over DBs \cite{Vard82} or KBs \cite{CDLLR07},
we distinguish between \emph{combined} and \emph{data} complexity.
For the latter, we adopt the definition provided by \citeasnoun{NKKK*19},
i.e., we measure data complexity in the cumulated size of the ABox and the
input integer $k$ (encoded in binary).

\paragraph{Query Shape.}

As we will see later,
the shape of the input CQ may play a role for tractability.
% This is well-known for combined complexity (at least for queries
% without counting), but more surprising \dl{I propose to fully remove this sentence.} for data complexity.
We define here the different query shapes used throughout the article.
Because our focus is on queries with unary and binary atoms, we can use the Gaifman
graph \cite{BKKRZ17} of a CQ to characterize such shapes: the Gaifman graph
$\G$ of a CQ $q$ is the undirected graph whose vertices are the
variables appearing in $\body(q)$,
and that contains an edge between $x_1$ and $x_2$ iff
$P(x_1,x_2) \in \body(q)$ for some role
% \dc{Was "binary predicate"}%
$P$.\footnote{%
 % As opposed to conventions sometimes adopted in the DB literature,
 This definition implies that the Gaifman graph of $q$ has an edge from $x$ to
 $x$ if $P(x,x) \in \body(q)$.}  We call $q$ \emph{connected} (denoted with $q
\in$ \CQC) if $\G$
is connected, \emph{linear} ($q \in$ \CQL) if the degree of each vertex in
$\G$ is $\leq 2$, and \emph{acyclic} ($q \in$ \CQA) if $\G$ is acyclic. We note
that none of these three notions implies any of the other two.
In addition, following \citeasnoun{NKKK*19},
% we distinguish \emph{rooted} queries: a CQ $q$ is rooted
we call a CQ \emph{rooted} ($q \in$ \CQR) if each connected component in $\G$
contains at least one constant or one distinguished variable.
Finally, a CQ $q$ is \emph{atomic} ($q \in$ \AQ) if $|\body(q)| = 1$.

\paragraph{Rewritability.}
\label{sec:rewrit-def}

Given a query language $\Q$, a $\Q$-rewriting of a \CQ $q$ with respect to a KB
$\K=\tup{\T,\A}$ is a $\Q$ query $q'$ whose answers over $\inter(\A)$ alone
coincides with the certain answers to $q$ over $\K$.  For instance, for OMQA
with boolean or enumeration queries, $\Q$ is traditionally the language of
domain independent first-order %logic
queries, the logical underpinning of SQL.  As for queries with counting, it has
been shown by \citeasnoun{GrMi96}, \citeasnoun{NKKK*19} that counting
answers over a relational DB can be captured by query languages with evaluation
in \LOGSPACE (data complexity).
%\dl{The language there uses bag semantics, put a comment here on the relationship. The argument would be that bag semantics over DBs coincides with counting.}

\section{Related Work}
\label{sec:sota}

Query answering under count semantics can be viewed as a specific case of query
answering under \emph{bag semantics}, investigated notably by
\citeasnoun{GrMi96} and \citeasnoun{LiWo97}, but for relational DBs rather than
KBs. Instead, in our setting, and in line with the OMQA/OBDA literature, we
assume that the input ABox is a set rather than a bag.
The counting problem over sets has also been studied recently in the
DB setting \cite{PiSk13,ChMe16}, but from the perspective of
combined complexity, where the shape of the query (e.g., bounded
treewidth) plays a prominent role.

As for (\dllite) KBs, \citeasnoun{CKNT08} define an alternative
(\emph{epistemic}) count semantics,
% known as \emph{epistemic count semantics},
which counts over all grounded tuples (i.e., over $\NI$) entailed by the KB.
Such a semantics does not account for existentially implied individuals, and
thus cannot capture the statistics motivating our work.

\ignore{
 citazioni buttate NKKK*19,CNKK*19
 \[
   q() \rdef B(x_1), E(x_2, x_3), C(x_2, x_4), C(x_3, x_4)
 \]

 Concerning \countP, \cite{NKKK*19} introduced the notion of rewritability in
 terms of the language \textsc{BCALC}, which works on \emph{bags} (as opposed to
 sets from relational calculus), and thus naturally supports \textsc{count}
 queries.  Here we adopt this notion of rewritability.\jc{Shall we move the
  paragraph below to Section 1 or Section 3?}  A key distinction between the
 work by \cite{NKKK*19} and ours is that here we consider the ABox as a set of
 assertions, whereas they explore the possibility of having a bag of assertions.
 This choice is motivated by our focus on OBDA, where ABoxes are only
 \emph{virtual} conceptual views of the underlying physical data, and not actual
 assertions with a clearly-defined multiplicity.
}

Instead, the work closest to ours, and which first introduced the
count semantics that we adopt here, is the one of \citeasnoun{KoRe15}, who
first showed \coNP-hardness of the \textsc{Count} problem for data
complexity for \dllitepos, with a reduction that uses a disconnected
and cyclic query.  \coNP-membership is also shown for DLs up to
\dllitecoreH.
% but for \SDC only.

\citeasnoun{NKKK*19}, \citeasnoun{CNKK*19} have studied query answering over a
KB under
bag semantics, and provide a number of complexity results (including
\coNP-hardness) and query answering techniques (including a rewriting
algorithm). Such semantics is clearly related to the count semantics,
but there are notable differences as argued by \citeasnoun{NKKK*19}. In
short, one cannot apply the intuitive idea of treating sets as bags
with multiplicities $1$. Hence algorithms and complexity results
cannot be transferred between the two settings, and this already holds
for ontology languages that allow for existential restrictions on the
LHS of ontology axioms (note that \emph{all} the logics considered in
this paper allow for such construct).  The following example
by \citeasnoun{NKKK*19} illustrates this.

\begin{example} %[From~\cite{NKKK*19}]
  Consider the KB $\K = \tup{\{A_1 \sqsubseteq \exists P,\, \exists P^-
   \sqsubseteq A\}$, $\{A_1(a),\, A_1(b)\}}$ and the query $q() \rdef A_2(y)$. If
  we evaluate this query in the count setting, then the answer is the
  empty tuple $\tup{}$ with cardinality $1$, because of the following model:
  \begin{center}
    \begin{tikzpicture}[font=\footnotesize\sffamily, node distance=2cm]
      \node[circle, draw=black, fill=black, inner sep=.05cm,
        label={$a$}, label={below}:{$A_1$}] (a) at (0,0) {};
      \node[circle, draw=black, fill=black, inner sep=.05cm, right of=a,
        label={$u$}, label={below}:{$A_2$}] (u) {};
      \node[circle, draw=black, fill=black, inner sep=.05cm, right of=u,
        label={$b$}, label={below}:{$A_1$}] (b) {};

      \draw[-stealth'] (b) -- node[above] {$P$} (u);
      \draw[-stealth'] (a) -- node[above] {$P$} (u);
    \end{tikzpicture}
  \end{center}
  However, such structure does not accurately represent a bag interpretation. In
  fact, under bag semantics every concept and property is associated to a bag
  of elements. %, rather than a set of elements.
  Such bag can be seen as a function that returns, given an element, the number
  of times such element occurs in the bag.  We build now a (minimal)
  bag interpretation $\I$ for $\K$.  To satisfy $\A$, we set
  $A_1^\I(a) = 1$ and $A_1^\I(b) = 1$
  % (by applying the intuitive idea that our ABox is a bag of assertions with
  % an associated multiplicity $1$).
  To satisfy $A_1 \sqsubseteq \exists P$, we introduce a single element $u$ (as
  above) and obtain $P^\I(a,u)=1$ and $P^\I(b,u)=1$. Therefore,
  $(P^-)^\I(u,a) = 1$ and $(P^-)^\I(u,b) = 1$, which, according to the
  semantics by \citeasnoun{NKKK*19}, imply that $(\exists P^{-})^{\I}(u)=2$.
  Therefore, to satisfy $\exists P^- \sqsubseteq A_2$, it must be that
  $A_2^\I(u) = 2$. In fact, the certain answer to $q$ over $\I$ under
  bag-semantics is the empty tuple $\tup{}$ with multiplicity $2$.
\end{example}

% \julienbox{if we buy an extra page,
%   would be really great to expand this a little bit.
%   E.g. give an example where both semantics differ.
%   \davidebox{Is it enough what we wrote on Sec. 5, or do you want an example?
%     I did not put an example because in the reference I have put there
%     there is exactly the example I would make here. Putting the same example
%   that is in another paper does not sound to me as a too good idea.}}
% \cite{NKKK*19} shows \coNP-hardness for \dllitecoreHpos,
%using a query that is both branching and cyclic. 

% Our notion of rewritability (cf.\ Section~\ref{sec:rewrit-def}) is
% also immediately adapted from \cite{NKKK*19}.  There, rooted CQs were also
% identified as a condition for rewritability, combined with disallowing in the
% TBox existential quantification or role inclusions, or limiting their
% interaction.  \cite{NKKK*19} provide also additional hardness and membership
% results for settings where the Unique Name Assumption (UNA) is not made.  These
% fall out of the scope of this paper, since we make the Standard Name Assumption
% (see Section~\ref{sec:preliminaries}), which is stronger that the UNA and
% natural in the OBDA setting.

\section{Tractability and Intractability}
\label{sec:tractability}

We investigate now conditions for in/tractability (in data complexity) of
\countP, focusing on the impact of the shape of the query.
We observe that the
queries used by \citeasnoun{KoRe15} and \citeasnoun{NKKK*19} to show
\coNP-hardness are cyclic, and either disconnected or branching.
Building upon these results,
we further investigate
the impact of query shape on tractability.
We start with a membership result:
\begin{proposition}\label{prop:ptime-memb}
  \textsc{Count} is in \PTIME in data complexity for \dlliteposHminNmin and
  connected, linear CQs (\CQCL).
\end{proposition}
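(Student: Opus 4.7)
The plan is, after the standard reduction to the boolean case described in the excerpt, to construct in polynomial time a specific model $\M$ of $\K$ that realizes $\certCard(q, \K)$, and then to count matches of $q$ in $\M$ by dynamic programming along the path underlying the \CQCL query.

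For the first step, I would produce $\M$ by folding the (possibly infinite) canonical model $\can$ into a finite, compact structure. Under \dlliteposHminNmin, the restricted chase introduces anonymous witnesses whose ``types'' (the basic concepts they satisfy together with the role atoms they participate in) are determined by the TBox; in data complexity, there are only constantly many such types. Since \dlliteposHminNmin is positive, no disjointness axiom can obstruct identifications, so I can greedily identify each anonymous witness with a type-compatible named individual, or with a previously introduced anonymous witness of the same type, whenever the resulting interpretation is still a model of $\K$. The $\H^-$ restriction prevents role inclusions from interacting globally with $\N^-$-generated witnesses, so the folding stabilizes after polynomially many steps. The resulting $\M$ has at most $|\A|$ named elements plus a bounded number of anonymous representatives, each labelled with a multiplicity induced by the $\N^-$ axioms $B \sqsubseteq \GEQ{n}{R}$.

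For the counting step, I would exploit that the Gaifman graph of $q$ is either a simple path or a simple cycle. I order the variables $x_0, \ldots, x_m$ of $q$ along the path and compute, for each position $i$ and each element $e$ of $\M$, the number $N_i(e)$ of partial matches of the prefix $x_0, \ldots, x_i$ sending $x_i$ to $e$; the recurrence proceeds along the single edge of the path incident to $x_i$, with transitions into an anonymous representative scaled by its multiplicity. For the linear-cycle case I perform one extra closure step enforcing $x_m = x_0$. Because $\M$ has polynomial size and each $N_i$ table is computed in one pass, the whole procedure runs in polynomial time in $|\A|$ (and logarithmic in $k$, since multiplicities can be handled symbolically).

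The hard part is to show that $\M$ actually minimizes the match count across all models of $\K$, so that the DP output equals $\certCard(q, \K)$ rather than some loose upper bound. The argument has to rule out that some further identification of anonymous individuals in an alternative model of $\K$ strictly reduces the match count: any such identification either violates an axiom (contradicting greediness of the folding) or collapses matches already handled by the multiplicity weights. Both connectedness and linearity of $q$ are used here---they guarantee that any match-merging identification is local, along a single edge of the path, and hence is already captured by the type-based folding. This is precisely where the argument fails for branching connected queries, which underlies the \coNP-hardness reduction of \citeasnoun{NKKK*19} and confirms why the proposition restricts to \CQCL.
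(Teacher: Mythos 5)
Your overall intuition---fold the canonical model, let anonymous elements carry multiplicities coming from the $\GEQ{n}{R}$ axioms, and exploit the path shape of the query---is close in spirit to the paper's argument, but the step you yourself flag as ``the hard part'' is precisely where the paper's proof does real work, and your proposed resolution does not supply it. The paper never builds a query-independent folded model $\M$. Instead it works with the set $\matches(q,\can)$ of matches of $\body(q)$ in the canonical model and minimizes, over \emph{all} constant-preserving functions $f$ defined on the arguments occurring in these matches, the ($\card$-weighted) size of $f(\matches(q,\can))$. Two facts then yield $\certCard(q,\K)$: (i) for every model $\I$ there is a homomorphism $h$ from $\can$ to $\I$, and $h(\matches(q,\can))\subseteq\matches(q,\I)$, so every model has at least $\min_f|f(\matches(q,\can))|$ matches; and (ii) connectedness and linearity of $q$, together with the limited expressivity of \dlliteposHminNmin, guarantee that \emph{some} minimizing $f$ satisfies $|f(\matches(q,\can))|=|\matches(q,f(\can))|$, i.e., the model $f(\can)$ realizes the bound without spurious new matches. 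The minimization is thus query-dependent, and the whole difficulty is showing that a minimizer can be chosen that does not create matches when applied to $\can$.

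Your greedy, type-based folding delivers neither half. Since \dlliteposHminNmin has no disjointness, the side condition ``whenever the resulting interpretation is still a model of $\K$'' essentially never blocks an identification (only the number restrictions can), so the greedy process will merge anonymous witnesses into type-compatible named individuals in a query-blind way; such a merge can \emph{create} new matches of the path query through the other role edges incident to that named individual, so the model your DP counts may have strictly more matches than the optimum---and nothing in the greedy rule ensures that the identifications which would actually lower the match count are the ones performed. Your closing claim that ``any match-merging identification is local, along a single edge of the path, and hence is already captured by the type-based folding'' is exactly the statement that needs proof (it is where connectedness and linearity are used in the paper), not a justification of it; and the assertion that the optimum is attained by a model with boundedly many anonymous representatives, each with a well-defined multiplicity, also requires an argument (the paper handles this by defining $\card$ on $f(\can)$ via a maximum over preimages and proving the identity $\certCard(q,\K)=\card_{f(\can)}(f(\matches(q,\can)))$ for a minimizing $f$). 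The dynamic-programming step, by contrast, is unproblematic but inessential: in data complexity the query is fixed, so counting its matches in any polynomial-size structure is already polynomial without the DP.
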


\begin{proof}[Sketch]
We start with \dlliteposHmin,
and then discuss how to extend the proof to \dlliteposHminNmin.
% The main intuition is the following.
If $\K = \tup{\T,\A}$ is a \dlliteposHminNmin KB and $q$ a boolean query in
\CQCL, consider the set $\matches(q, \can)$ of all matches for $\body(q)$ over
the canonical model $\can$ of $\K$.  Then viewing $\matches(q, \can)$ as a
relation (i.e., a set of tuples), let $F_{\min}$ be the set of all
constant-preserving functions, whose domain is the set of all arguments in
$\matches(q, \can)$ and that minimize the number of resulting tuples when
applied to $\matches(q, \can)$.
% i.e. $|f_{\min}(\matches(q, \can))| \leq |f(\matches(q, \can))|$ for any $f \in F$.
% For instance, Figure~\ref{fig:minHom} provides $\matches(q, \can)$ and
% $f_{\min}$ for some $q$ and $\K$.
%
Because $q$ is connected and linear,
and thanks to the limited expressivity of \dlliteposHminNmin,
it can be shown that there must be some $f \in F_{\min}$ that verifies $|f(\matches(q, \can))| = |\matches(q, f(\can))|$.
Since every model $\I$ of $\K$ verifies $\matches(q, \I) \subseteq
h(\matches(q, \can))$ for some homomorphism $h$,
and because $f(\can)$ is a model of $\K$,
it follows that $\certCard(q,\K) = |f(\matches(q, \can))|$.
Then it can also be shown that $|f(\matches(q, \can))|$ can be computed in time polynomial in $|\A|$.
% $|f_{\min}(\matches(q, \can))|$,
% $k$ s.t. $k \in \countCertAns(q,\K)$,
 %  To prove the first claim, we use the following three properties:
 %  \ei if $f$ a name-preserving function with domain $\Delta^{\can}$,
 % then $f(\can)$ is a model of $\K$,
 % \eii $|f_{\min}(\matches(q,\can))| = |\matches(q,f_{\min}(\can))|$,
 % and
 % \eiii for any model $\I$ of $\K$,
 %  there is an $f \in F$ s.t. $f(\matches(q,\can)) \subseteq \matches(q,\I)$.
 % From \ei and \eii,
 % there exists a model $\J$ of $\K$ (namely $f_{\min}(\can)$) s.t.
 % $|\matches(q,\J)| = |f_{\min}(\matches(q, \can))|$.
 %  Then from \eiii,
 %  for any model $\I$ of $\K$,
 %  there is an $f \in F$ s.t. $|f(\matches(q,\can))| \leq |\matches(q,\I)|$.
 %  So $|f_{\min}(\matches(q,\can))| \leq |\matches(q,\I)|$ holds.
  
  % To prove the second claim,
  % we first observe that if $\can$ is finite,
  % then it can be built in time polynomial in $|\A|$ (but exponential in $|\T|$).
  % And that if $\can$ is infinite,
  % it is sufficient for our purpose to build it up to a certain depth,
  % which depends on $|\T|$ and $|q|$ only.
  % Then w.l.o.g., we restrict our search space to the subset $F'$ of $F$ defined by $f \in F'$ iff there are $\tu_1,\tu_2 \in \matches(q,\can)$ s.t. $f(\tu_1) = f(\tu_2)$.
  % The claim follows from the fact that for any $f_1, f_2 \in F'$,
  % if $|f_1(\matches(q, \can))| < |f_2(\matches(q, \can))|$,
  % then there must be an $f_3 \in F'$ s.t.
  % $|f_3(f_2(\matches(q, \can)))| \leq |f_1(\matches(q, \can))|$.
  Now for \dlliteposHminNmin,
  to account for cardinality restrictions,
  we associate in every interpretation $\I$ a cardinality $\card_\I(e)$ to each $e \in \Delta^{\I}$:
  cardinality $1$ for elements of $\NI$,
  and possibly more than $1$ for elements of $\NE$.
  E.g., if $\K$ implies that an element $a \in \NI$ has 4 $P$-successors for
  some role $P$,
  and if there is only one $b \in \NI$ s.t.\ $(a,b) \in P^{\A}$,
  then $(a,e) \in P^{\can}$ for some $e \in \NE$,
  and $\card_{\can}(e) = 4-1 = 3$.
  Applying a function $f$ to an interpretation $\I$ affects these cardinalities:
  for each $e \in \Delta^{f(\I)}$,
  $\card_{f(\I)}(e) = \max\{\card_\I(e') \mid{} f(e') = e\}$.
  Then we extend cardinality to a tuple $\tu$ of elements, as
  % $\card_\I(\tu) = \prod\limits_{e \in \tu} \card_\I(e)$,
  $\card_\I(\tu) = \prod_{e \in \tu} \card_\I(e)$,
  and to a set $T$ of tuples, as
  % $\card_\I(T) = \sum\limits_{\tu \in T} \card_\I(\tu)$.
  $\card_\I(T) = \sum_{\tu \in T} \card_\I(\tu)$.
  In this extended setting,
  $\certCard(q,\K) = \card_{f(\I)}(\matches(q, \can))$
  for some function $f$ that minimizes $\card_{f(\I)}(\matches(q, \can))$.
%  \jc{There was a figure in a previous state of the paper to illustrate the
%   main intuition. Maybe we can provide it in appendix?}%
  And this value can still be computed in time polynomial in $|\A|$.
\end{proof}

We now show that
%for \EDC,
for \dllitepos already,
disconnectedness alone leads to intractability,
i.e., cyclicity is not needed.
\begin{proposition}\label{prop:hardness_disconnected}
  \textsc{Count} is \coNP-hard in data complexity for \dllitepos and acyclic, linear, but disconnected CQs (\CQAL).
\end{proposition}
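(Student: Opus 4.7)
The plan is to reduce from a standard \coNP-hard problem---for example, the complement of 2+2-SAT (where each clause has exactly two positive and two negative literals, a classical \NP-complete restriction of SAT) or of graph 3-colorability---in the spirit of \citeasnoun{KoRe15}. A natural \CQAL witness is the boolean query $q() \rdef R(x_1,y_1), S(x_2,y_2)$, whose Gaifman graph consists of two disjoint edges and is therefore acyclic, linear, and disconnected. The fixed TBox $\T$ would contain only positive inclusions of the form $B \sqsubseteq \exists P$, generating anonymous $P$-successors for ABox elements in $B$. Since \dllitepos admits neither disjointness nor number restrictions beyond $\GEQ{1}{R}$, each such anonymous element can be freely identified with an ABox constant or merged with another anonymous element, so that every identification strategy gives rise to a distinct model of $\K$.

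The ABox $\A$ would encode the input instance: variables and clauses together with their incidence (or vertices together with their adjacency). The crucial observation is that for a disconnected query the match count in a model $\I$ factorises as $|R^\I|\cdot|S^\I|$, so minimising this product over all models becomes a combinatorial optimisation problem. I would design $\A$ and $\T$ so that identifications which reduce $|R^\I|$ correspond to choosing a truth assignment (respectively, a color for each vertex), while identifications affecting $|S^\I|$ reflect whether every clause is satisfied (respectively, whether every edge has distinctly-colored endpoints). The polynomial-time computable target integer $k$ is then the value of the product attainable \emph{exactly} when the instance admits a satisfying assignment (or a proper 3-coloring), so that $\certCard(q,\K)=k$ distinguishes YES from NO instances of the complement problem, giving the desired \coNP lower bound.

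The main obstacle will be the weakness of \dllitepos: with only positive concept inclusions and no mechanism to forbid configurations, discrete choices cannot be directly enforced, and a naive merging strategy---identifying every anonymous element with one and the same ABox constant---would collapse the count to a trivial constant. The product structure of the disconnected query is precisely what prevents this collapse: an identification that is cheap for one factor must be made expensive for the other, so that the global minimisation does not decompose into two independent polynomial-time subproblems. Engineering an ABox in which these tensions faithfully encode a \coNP-hard combinatorial problem, while simultaneously keeping $q$ acyclic, linear, and disconnected, is the delicate part of the proof.
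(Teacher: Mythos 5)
Your high-level strategy is the same one the paper uses (and the one inherited from \citeasnoun{KoRe15}): reduce from co-3-colorability, let a positive inclusion generate anonymous color-successors that the adversarial model may identify with named elements, and exploit the fact that for a disconnected query the number of matches is the \emph{product} of the per-component counts. But what you have written is a plan, not a proof, and the part you defer as ``the delicate part'' is exactly the content of the proposition. Concretely, two gaps. First, your candidate query $q() \rdef R(x_1,y_1), S(x_2,y_2)$ cannot work: a connected component consisting of a single atom with variables occurring once only counts the cardinality of one relation in the model, so nothing in the query can detect a \emph{violation} such as ``two adjacent vertices received the same color'' --- that requires a join inside a component. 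The paper's query has three linear five-atom components of the form $\ex{edge}(v_i,v_j) \wedge \ex{hasColor}(v_i,c_i) \wedge \ex{hasColor}(v_j,c_j) \wedge \ex{Blue}(c_i) \wedge \ex{Blue}(c_j)$ (and likewise for Green and Red), which is still in \CQAL but is what makes monochromatic edges visible.

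Second, you give no mechanism that pins down the baseline count and blocks the adversary's escapes, and without it your claimed equivalence ``product $= k$ iff the instance is a yes-instance'' has no argument. In the paper this is done by (i) a gadget constant $\ex{a}$ with $\ex{edge}(\ex{a},\ex{a})$ and $\ex{hasColor}(\ex{a},\cdot)$ to all three named colors, so that each color-component has exactly one match in the intended model (and at least one in every model --- essential, since a component with zero matches would make the empty tuple fail to be an answer at all); and (ii) the extra component $\ex{Color}(c)$ together with the axiom $\exists\ex{hasColor}^- \sqsubseteq \ex{Color}$, which punishes the adversary's only remaining move of coloring some vertex with a fresh anonymous color by pushing the $\ex{Color}$-count from $3$ to at least $4$. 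With these pieces one gets the clean threshold $\certCard(q,\K) \ge 4$ iff $\G$ is not 3-colorable (and exactly $3$ otherwise), which is what makes the reduction go through. Your proposal names the right lever but stops before constructing any of this, so as it stands it does not establish the hardness claim.
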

\begin{proof}[Sketch]
The proof is a direct adaptation of the one provided by \citeasnoun{KoRe15}.
We use a reduction from co-3-colorability to an instance of \textsc{Count}.
Let $\G = \tup{V, E}$ be an undirected graph with vertices $V$,
edges $E$, and without self-loops.
The ABox is $\A =\{\ex{Vertex}(v)\mid v \in V\}$ $\cup$
$\{\ex{edge}(v_1, v_2)\mid (v_1, v_2) \in E\}$ $\cup$
$\{\ex{Blue}(\ex{b}), \ex{Green}(\ex{g}), \ex{Red}(\ex{r}),$
$\{\ex{Color}(\ex{b}), \ex{Color}(\ex{g}), \ex{Color}(\ex{r}),$
$\ex{hasColor}(\ex{a}, \ex{b}), \ex{hasColor}(\ex{a}, \ex{g}),
\ex{hasColor}(\ex{a}, \ex{r}),$ $\ex{edge}(\ex{a},\ex{a})\}$
for some fresh constants $\ex{a}$, $\ex{b}$, $\ex{r}$, and $\ex{g}$.
The TBox is $\T = \{\ex{Vertex} \sqsubseteq \exists \ex{hasColor}, \exists
\ex{hasColor}^-  \sqsubseteq \ex{Color}\}$.
And the (acyclic, non-branching) query is
$q() \rdef$
$\ex{Color}(c),$
$\ex{edge}(v_1, v_2),$
$\ex{hasColor}(v_1, c_1),$
$\ex{hasColor}(v_2, c_2),$
$\ex{Blue}(c_1),$
$ \ex{Blue}(c_2),$
$\ex{edge}(v_3, v_4),$
$\ex{hasColor}(v_3, c_3),$
$\ex{hasColor}(v_4, c_4),$
$\ex{Green}(c_3),$
$\ex{Green}(c_4),$
$\ex{edge}(v_5, v_6),$
$\ex{hasColor}(v_5, c_5),$
$\ex{hasColor}(v_6, c_6),$
$\ex{Red}(c_5),$
$\ex{Red}(c_6)$.
Then it can be verified that $\certCard(q, \tup{\T,\A}) \geq 4$ iff $\G$ is not
3-colorable.
\end{proof}

Next we show that for the more expressive DL \dlliteposH,
branching alone leads to intractability:
\begin{proposition}\label{prop:hardness_branching}
  \textsc{Count} is \coNP-hard in data complexity for \dlliteposH and acyclic,
  connected, but branching CQs (\CQAC).
\end{proposition}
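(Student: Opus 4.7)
The plan is to adapt the co-3-colorability reduction used in Proposition~\ref{prop:hardness_disconnected}, exploiting role inclusions (available in \dlliteposH but not in \dllitepos) to ``glue'' the three disconnected components of the original query into a single connected, acyclic, but branching query.

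First, I would reuse essentially the ABox of Proposition~\ref{prop:hardness_disconnected}, encoding the input graph $\G = \tup{V,E}$ together with color constants \ex{b}, \ex{g}, \ex{r} and an anchor individual \ex{a} whose three \ex{hasColor}-successors are \ex{b}, \ex{g}, \ex{r}. The TBox would still contain the axioms $\ex{Vertex}\sqsubseteq\exists\ex{hasColor}$ and $\exists\ex{hasColor}^{-}\sqsubseteq\ex{Color}$, so that each vertex acquires a forced color-successor in the canonical model and any 3-coloring of $\G$ corresponds to a homomorphism mapping these anonymous successors into $\{\ex{b},\ex{g},\ex{r}\}$. On top of that I would add role inclusion axioms of the form $\ex{hasColor}\sqsubseteq R$ (and if useful $\ex{edge}\sqsubseteq R$) for a fresh role $R$, which licenses the query to traverse color- and edge-links through a common relation.

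Second, I would rewrite the disconnected query of Proposition~\ref{prop:hardness_disconnected} as a single tree-shaped CQ: a central variable $x_0$, intended to match the anchor \ex{a}, is $R$-linked to three subqueries---one per color---each subquery mirroring the monochromatic-edge conjunction $\ex{edge}(y,z),\,\ex{hasColor}(y,c),\,\ex{hasColor}(z,c'),\,C_k(c),\,C_k(c')$ of Proposition~\ref{prop:hardness_disconnected}. Since $x_0$ branches into three independent subtrees via the fresh role $R$ and no cycles are introduced, the resulting query lies in \CQAC. The counting analysis then mimics that of Proposition~\ref{prop:hardness_disconnected}: each color subtree independently counts monochromatic edges of its color across models, so the certain cardinality crosses a fixed threshold iff $\G$ admits no 3-coloring.

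The main obstacle will be controlling the extra matches generated by the role inclusions, ensuring that the anchor variable $x_0$ contributes only a bounded multiplicity in every model of $\K$. Any match of $x_0$ outside $\{\ex{a}\}$ would inflate the count in a way not directly tied to 3-colorability, so one must either ensure by construction that no other element can serve as an $R$-predecessor of all three color witnesses simultaneously, or adjust the numerical threshold to absorb these extras. A secondary subtlety is routing the three color-branches through $x_0$ without introducing any shared variable that would turn the Gaifman graph cyclic, which constrains the allowed overlap between branches and the exact choice of which original variables are lifted to children of $x_0$.
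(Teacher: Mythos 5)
There is a genuine gap: your query has no analogue of the $\ex{Color}(c)$ atom of Proposition~\ref{prop:hardness_disconnected}, i.e., no device whose match count increases when a vertex takes a color \emph{outside} $\{\ex{b},\ex{g},\ex{r}\}$. This device is essential. In the non-3-colorable case the adversarial model is (essentially) the canonical model, in which every vertex $v$ gets a \emph{fresh anonymous} $\ex{hasColor}$-successor; no monochromatic $\{\ex{b},\ex{g},\ex{r}\}$-edge ever appears there, so each of your three color branches still has exactly one match (the self-loop at the anchor $\ex{a}$), and the hub variable $x_0$ still matches only $\ex{a}$. Keeping the axiom $\exists\ex{hasColor}^-\sqsubseteq\ex{Color}$ does not help, because $\ex{Color}$ no longer occurs in the query. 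Hence the certain cardinality is the same ($1$, with your wiring) whether or not $\G$ is 3-colorable, and no threshold can separate the two cases. This is exactly the part of the construction that cannot be ``mirrored'' naively: in the disconnected reduction the $\ex{Color}(c)$ atom is a free-standing multiplicative factor, and the whole difficulty of the connected case is to re-attach such a counting gadget to the rest of the query. The paper does this with a different use of role inclusions than yours: it introduces $\ex{s}\sqsubseteq\ex{u}$ and $\ex{hasColor}\sqsubseteq\ex{u}$, adds ABox facts $\ex{s}(v,c)$ for every vertex $v$ and every $c\in\{\ex{b},\ex{g},\ex{r}\}$, and appends the subquery $\ex{s}(v_7,c_1)\wedge\ex{u}(v_7,c_7)$ hanging off the blue color variable $c_1$; the baseline count is then $3\cdot|V|$, and any anonymous (or extra) color forced by $\ex{Vertex}\sqsubseteq\exists\ex{hasColor}$ adds a $\ex{u}$-edge and pushes the count to at least $3\cdot|V|+1$.

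A second, smaller problem is your glue itself. The paper connects the three monochromatic-edge subqueries not via role inclusions but via an ordinary ABox role $\ex{conn}$ with hub facts $\ex{conn}(v,\ex{a})$ and $\ex{conn}(\ex{a},v)$ for \emph{every} vertex $v$ (query atoms $\ex{conn}(v_1,v_3)$, $\ex{conn}(v_3,v_5)$). This is needed so that a match witnessing a monochromatic edge, where $v_1$ is mapped to a real graph vertex $w_1$ rather than to $\ex{a}$, still extends to a match of the whole connected query. With your $\ex{hasColor}\sqsubseteq R$ / $\ex{edge}\sqsubseteq R$ glue through a fresh hub variable $x_0$, either such extensions may fail to exist (breaking the ``not 3-colorable $\Rightarrow$ high count'' direction), or, with other wirings, $x_0$ picks up model-dependent multiplicities that you would then have to control --- the very obstacle you flag but do not resolve. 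So the overall strategy (adapt the co-3-colorability reduction, exploit \dlliteposH, make the query a tree) is the right one, but both the color-escape counter and the connection gadget need the concrete constructions above for the reduction to go through.
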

% \begin{proof}[Sketch]
% \julienbox{Decide whether we provide here the (lengthy) reduction, or refer to
%  the extended version.}
% \end{proof}

% \begin{figure}
%   \centering
%   \scalebox{.5}{\input{figures/queryShape_branching}}\qquad\qquad\qquad
%   \scalebox{.5}{\input{figures/queryShape_disconnected}}
%     \caption{Gaifman Graphs used in Propositions~\ref{prop:hardness_branching} (left) and~\ref{prop:hardness_disconnected} (right)}
%     \label{fig:queryShapes}
%   \end{figure}

%   \begin{figure}
%   \centering
%   \begin{minipage}{0.25\textwidth}
%         \centering
%       \input{figures/queryShape_branching.tex} 
%     \end{minipage}%
%     \begin{minipage}{0.25\textwidth}
%       \centering
%       \input{figures/queryShape_disconnected.tex} 
%     \end{minipage}
%     \caption{Gaifman graphs of the CQs used in the reductions for Propositions~\ref{prop:hardness_branching} (left) and~\ref{prop:hardness_disconnected} (right)}
%     \label{fig:queryShapes}
% \end{figure}

  % Note that this result holds for \EDC only,
  % whereas \coNP-hardness for arbitrary \CQ{}s was shown by
  % \cite{KoRe15} for \SDC, which is a stronger result.
  % Intractability in \SDC for \CQAC remains open.

Finally, we observe that the \coNP upper bound provided by \citeasnoun{KoRe15}
for \dllitecoreH extends to \dllitecoreHNmin,\footnote{With a technicality:
 the input integer $k$ is not included in the notion of data complexity used
 by \citeasnoun{KoRe15}.} since number restrictions can be encoded in
\dllitecoreH,
as explained in Section~\ref{sec:preliminaries}.
% since cardinality constraints can
% be polynomially translated (under unary number encoding) into \dllitecoreH
% using disjointness and fresh atomic concepts.
% \julienbox{
% I don't know if we need to talk about unary/binary encoding here, because this has no effect on data complexity, at least the way we defined data complexity above.}
\begin{proposition}
  \textsc{Count} is in \coNP in data complexity for \dllitecoreHNmin and arbitrary CQs (\CQ).
\end{proposition}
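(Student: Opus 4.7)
The plan is to reduce \textsc{Count} for \dllitecoreHNmin to \textsc{Count} for \dllitecoreH in data complexity, and then invoke the \coNP upper bound already established for the latter by \citeasnoun{KoRe15}.

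First, I would rewrite the input TBox $\T$ in \dllitecoreHNmin as a TBox $\T'$ in \dllitecoreH using the encoding sketched at the end of Section~\ref{sec:preliminaries}: each axiom $B \sqsubseteq \GEQ{n}{R}$ is replaced by $n$ fresh sub-roles $P_1,\ldots,P_n$ together with the axioms $B \sqsubseteq \exists P_i$, $P_i \sqsubseteq R$, and $\exists P_i^- \sqsubseteq \neg \exists P_j^-$ for $i \neq j$, forcing $n$ distinct $R$-successors for every instance of $B$. The resulting $\T'$ is polynomial in $n$, hence in principle exponential in the binary-encoded size of $n$; but in the data-complexity setting $\T$, and therefore every such $n$, is a fixed constant, so the blowup amounts to a constant factor. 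Models of $\tup{\T',\A}$ restricted to the original vocabulary are exactly the models of $\tup{\T,\A}$, and matches for $q$ depend only on this original vocabulary, so $\certCard(q,\tup{\T,\A}) = \certCard(q,\tup{\T',\A})$.

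Next, I would apply the \coNP algorithm of \citeasnoun{KoRe15} for \dllitecoreH to the reduced instance: nondeterministically guess a polynomial-size model $\I$ of $\K$ (obtained by suitably collapsing anonymous elements of the canonical model) witnessing that the number of matches for $q$ in $\I$ disagrees with the input $k$, and verify in polynomial time in $|\A|$ that $\I$ is indeed a model of $\K$ and that the match count is as claimed.

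The single remaining point requiring care, and the main obstacle I anticipate, is the technicality raised in the footnote: our notion of data complexity, following \citeasnoun{NKKK*19}, charges for the binary encoding of $k$, while \citeasnoun{KoRe15} do not. To handle this, I would argue that for a fixed query $q$ the minimum of $|\ans(q,\I)|$ over models $\I$ of $\K$ is realized on a model of polynomial size, and is therefore itself bounded by a polynomial in $|\A|$; consequently, if the input $k$ exceeds this explicit polynomial bound the instance is rejected without further computation, and otherwise $k$ has polynomial bit-length, so the nondeterministic guess-and-check above runs in polynomial time and yields \coNP membership.
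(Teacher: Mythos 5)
Your proposal is correct and follows essentially the same route as the paper: encode the number restrictions of \dllitecoreHNmin into \dllitecoreH (a fixed-TBox transformation, hence harmless for data complexity) and then invoke the \coNP upper bound of \citeasnoun{KoRe15}. Your additional argument that $\certCard(q,\K)$ is polynomially bounded in $|\A|$, so the binary-encoded input $k$ causes no trouble, is a sound way of discharging the technicality the paper only flags in a footnote.
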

% \begin{proof}[Proof (sketch).]
%   It is sufficient to observe that cardinality constraints can be polynomially translated (under unary number encoding) into \dllitecoreH using disjointness and fresh atomic concepts.
%   Then the upper bound provided in~\cite{NKKK*19} immediately applies.\dl{Check if this upperbound is for bag or count.\\
%     JC: I was referring to the 2013 paper, so sets.}
% \end{proof}

%%% Local Variables:
%%% mode: latex
%%% TeX-master: "ms"
%%% fill-column: 79
%%% End:

\section{Rewritability and Non-rewritability}
\label{sec:rewritability}

% As already observed in \cite{},
% key conditions for the existence of a rewriting algorithm are:
% \begin{itemize}
% \item the fact that the query is grounded (i.e. contains at least one constant) and connected
% \item the absence of role inclusions in the TBox
% \end{itemize}
% Our results either refine or improve the ones provide in \cite{NKKK*19,CNKK*19}.
% And again, we provide positive results for DLs that allow (positive) cardinality restrictions on the RHS of GCIs.
% Our main results can be summarized as follows:
% \begin{itemize}
% \item the \textsc{Count} problem is \PTIME-hard for DLs with role subsumption and negation, even for grounded or atomic queries,
%   which suggests non-rewritability for such DLs,
% \item the \textsc{Count} problem is in \LOGSPACE for DLs with \julien{(TODO: specify what are the restrictions)} and gounded, connected queries.
% \end{itemize}
% In addition,
% we provide a novel query rewriting algorithm for queries and DLs that satisfy these conditions for \LOGSPACE-membership.  

We now investigate conditions for rewritability.
We start by showing \PTIME-hardness for DLs with role inclusions and disjointness, and atomic queries.
\begin{proposition}\label{prop:ptimeHard_atomic}
  \textsc{Count} is \PTIME-hard in data complexity for $\dllitecoreH$ and atomic queries (\AQ).
\end{proposition}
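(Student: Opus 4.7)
The plan is to reduce from a standard \PTIME-complete problem, such as Horn-SAT entailment or Monotone Circuit Value, to an instance of \textsc{Count} for \dllitecoreH with an atomic query. Because data complexity fixes both the TBox $\T$ and the query $q() \rdef A(x)$, the input of the source problem has to be encoded entirely in the ABox $\A$ together with (possibly) the integer $k$.

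The starting observation is that for a boolean atomic query $q() \rdef A(x)$, the value $\certCard(q,\K)$ is the minimum of $|A^{\I}|$ over all models $\I$ of $\K$. The canonical model $\can$ gives an upper bound, and any strictly smaller value comes from a model obtained by identifying distinct elements of $A^{\can}$. The admissible identifications are those compatible with the disjointness axioms in $\T$ (which forbid merging elements forced into disjoint concepts) and with the role inclusions (which propagate concept and disjointness information along roles). Thus the task reduces to finding the coarsest quotient of $\can$ that remains a model of $\K$.

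Next, I would craft a fixed TBox $\T$ in which role inclusions of the form $R_1 \sqsubseteq R_2$, combined with axioms such as $\exists R_2 \sqsubseteq A$ and $B \sqsubseteq \exists R_2$, encode a propagation mechanism, and in which disjointness axioms forbid the identifications that would undercount. Given an instance of the source problem, the ABox $\A$ represents each structural component (e.g., a propositional variable, a clause, or a gate) by an individual, with role assertions encoding the dependencies (rule heads and bodies, or wires). The integer $k$ is then derived from the instance so that $\certCard(q,\K) = k$ iff the instance is a YES-instance.

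The main obstacle is the limited expressiveness of \dllitecoreH, which does not allow conjunctions on the left-hand side of axioms and hence cannot directly simulate multi-premise Horn rules or AND-gates through TBox assertions. The key idea to overcome this is to shift the burden to the counting semantics itself: the ABox is engineered so that merging a pair of canonical elements is possible only when several disjointness constraints are simultaneously satisfied, which effectively realises a conjunction of premises. Showing that the minimum count over all models coincides with $k$ exactly in the YES-case, i.e., the correctness of this encoding, is the most delicate step and drives the precise choice of $\T$ and the ABox gadget.
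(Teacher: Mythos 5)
Your overall strategy is the same as the paper's: a \LOGSPACE reduction from a \PTIME-complete circuit-evaluation problem, with the instance encoded entirely in the ABox, a fixed TBox and atomic query, and a threshold $k$ chosen so that the certain cardinality detects whether the existential witnesses required by the TBox can be realised ``for free'' inside the ABox. Your preliminary observation that $\certCard(q,\K)$ is attained on homomorphic images of the canonical model is also sound. The problem is that the sketch stops exactly where the real work begins, and the step you defer is the non-obvious one. You correctly identify the obstacle (no conjunction on the left-hand side of \dllitecoreH axioms, so a two-premise gate or Horn rule cannot be stated directly), but your proposed remedy --- engineering the ABox so that ``merging a pair of canonical elements is possible only when several disjointness constraints are simultaneously satisfied'' --- is not a construction and does not obviously work as stated: disjointness can only \emph{forbid} identifications, and each individual \dllitecoreH axiom can only demand a single witness, so no combination of these by itself expresses ``all (two) inputs are true.'' Nothing in the proposal specifies the gadget, the value of $k$, or why the minimum count separates YES from NO instances.

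The paper's proof supplies precisely the missing device. It reduces from the (co-)problem of evaluating a NAND-gate circuit. Inputs and gates are ABox individuals linked by a role $P$; the TBox adds subroles $P_T \sqsubseteq P$, $P_F \sqsubseteq P$ and concepts $T,F$ with $T \sqsubseteq \neg F$, $\exists P_T \sqsubseteq T$, $\exists P_F \sqsubseteq F$, and encodes NAND semantics by $T \sqsubseteq \exists P_F^-$ (a true gate has a false input) and $F \sqsubseteq \GEQ{2}{P_T^-}$ (a false gate has \emph{two} true inputs). The ``at least two distinct witnesses'' requirement is exactly the conjunction you were trying to simulate, and it is compiled into \dllitecoreH by two fresh subroles of $P_T^-$ whose relevant projections are declared disjoint, as recalled in Section~\ref{sec:preliminaries}. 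The atomic query is $q() \rdef P(x_1,x_2)$: since $P_T,P_F \sqsubseteq P$, a model may label existing ABox $P$-edges as the required witnesses at no cost to the count, whereas any fresh witness adds a new $P$-atom; a small fixed gadget ($t_1,t_2,f$ with extra $P$-edges) lets the circuit inputs find their witnesses among ABox edges as well. Hence there is a model with $|P^\I|=|P^\A|$ iff the circuit is valid, and the circuit is \emph{not} valid iff $\certCard(q,\tup{\T,\A}) = |P^\A|+1$. Without this witness-absorption gadget and the accompanying accounting of the certain count, your sketch remains a plan rather than a proof.
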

\begin{proof}[Sketch]
 % The proof uses a \LOGSPACE reduction from the co-problem of
 % evaluating a boolean circuit with only NAND gates~\cite{GrHR91}
  %, each with fan-out
  %at most 2,
 % the circuit corresponds to the parameter $k$.
  %\begin{comment} % Commented out by Davide
  We show a \LOGSPACE reduction from the co-problem of evaluating a boolean circuit where all gates are NAND gates~\cite{GrHR91}
  to
  % an instance of
  \countP.
We view such a circuit as an interpretation $\C$ 
whose domain consists of the circuit inputs and gates.
$T_I$, $F_I$, and $T_O$ are unary predicates interpreted in $\C$ as the
positive circuit inputs,
the negative circuit inputs and the (unique) target gate respectively.
$P$ is a binary predicate s.t.\ $(q, g) \in P^\C$ iff gate $g$ has input $q$
(where $q$ can be either a circuit input or another gate).

The TBox $\T$ is defined by $\T = \P \cup \T_1 \cup \T_2$,
where $\P = \{P_T \sqsubseteq P, P_F \sqsubseteq P \}$,
$\T_1 = \{F_I \sqsubseteq F, T_I \sqsubseteq T, T_O \sqsubseteq T, T \sqsubseteq \neg F \}$
and $\T_2 = \{T \sqsubseteq \exists P_F{^-}, F \sqsubseteq (\geq_2 P_T{^-}), \exists P_T \sqsubseteq T, \exists P_F \sqsubseteq F\}$.\footnote{
  The axiom $\geq_2 P_T{^-}$ can be encoded into $\dllitecoreH$,
  as explained in Section~\ref{sec:preliminaries}.
}
Intuitively,
the unary predicates $T$ and $F$ correspond to gates that evaluate to true and false respectively in the circuit,
and binary predicates $P_T$ and $P_F$ specialize $P$ to positive and negative inputs.
$\T_2$ encodes constraints pertaining to NAND gates: a positive gate must have at least one negative input,
and a negative gate must have two positive inputs.
Then $\T_1$ enforce that no gate can be both positive and negative,
and that the circuit inputs and the output gate have the desired truth values.

Finally,
as a technicality,
the ABox $\A$ is an extension of $\C$,
i.e., $\C \subseteq \inter(\A)$. %, viewing $\A$ as a structure.
The domain of $\A$ contains 3 additional individuals $\mathit{t_1}, \mathit{t_2}$ and $\mathit{f}$,
and it extends $P^\C$ with $\bigcup_{i \in (T_I)^\C} \{P(\mathit{f},i), P(\mathit{t_1},i)\}$,
$\bigcup_{i \in (F_I)^\C} \{P(\mathit{t_1},i), P(\mathit{t_2},i)\}$,
and $\{P(\mathit{t_1},\mathit{f}), P(\mathit{t_2},\mathit{f}), P(\mathit{f},\mathit{t_1}), P(\mathit{f},\mathit{t_2}), P(\mathit{t_2}, \mathit{t_1}),P(\mathit{t_1}, \mathit{t_2})\}$.

Then it can be verified that $\C$ is a valid circuit iff there exists a model $\I$ of $\tup{\T,\A}$ s.t.\
$|P^\I| = |P^\A|$.
Now let $q$ be the query $q() \rdef P(x_1, x_2)$. 
It follows that $\C$ is not a valid circuit iff $|P^\A|+ 1 = \certCard(q, \tup{\T,\A})$.
%\end{comment}
\end{proof}

Assuming \PTIME $\nsubseteq$ \LOGSPACE,
this implies that for such DLs,
even atomic queries cannot be rewritten into a query language whose evaluation is in \LOGSPACE,
which is sufficient to capture counting over relational databases.
Interestingly,
the reduction can be adapted so that it uses instead a query that is rooted, connected and linear (but not atomic).
% \PTIME-hardness still holds for queries that are rooted, linear, and connected
% This is an important observation,
% since the rewritability results below imply that for such queries,
% additional restrictions on the DL guarantee \LOGSPACE-membership.  
\begin{proposition}\label{prop:ptimeHard_rooted}
  \textsc{Count} is \PTIME-hard in data complexity for $\dllitecoreH$ and rooted, connected, linear queries (\CQCLR).
\end{proposition}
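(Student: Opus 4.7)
The plan is to adapt the reduction from the proof of Proposition~\ref{prop:ptimeHard_atomic}. Recall that there the atomic query $q() \rdef P(x_1,x_2)$ counts $P$-edges in a model, and the circuit $\C$ is valid iff the minimum $|P^\I|$ equals $|P^\A|$. To replace the atomic query by a rooted, connected, linear one, I would introduce a fresh constant $a_0$ and a fresh role $S$, extend the ABox with the assertions $S(a_0, v)$ for every $v \in \domain(\A)$, and use the query $q() \rdef S(a_0, x_1), P(x_1, x_2)$. The Gaifman graph of $q$ consists of the single edge joining $x_1$ and $x_2$, so $q$ is connected and linear; the presence of the constant $a_0$ in $S(a_0, x_1)$ makes it rooted, so $q \in \CQCLR$.

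The main obstacle is that the axioms $T \sqsubseteq \exists P_F^-$ and $F \sqsubseteq \GEQ{2}{P_T^-}$ of the original reduction force new $P$-edges whose \emph{sources} are anonymous individuals; such edges would not contribute to the counts of $q$, since a match must have $x_1 \in \domain(\A)$. To fix this, I would reverse the direction of $P$ throughout the encoding: replace every assertion $P(u,v) \in \A$ by $P(v,u)$, and replace $\T_2$ by $\T_2' = \{T \sqsubseteq \exists P_F,\; F \sqsubseteq \GEQ{2}{P_T},\; \exists P_T^- \sqsubseteq T,\; \exists P_F^- \sqsubseteq F\}$. Under this reversal, $P(g,q)$ reads ``gate $g$ has input $q$'', and any witness required by $\T_2'$ is a $P_T$- or $P_F$-\emph{successor} of a gate in $\domain(\A)$; hence every such new $P$-edge has its source in $\domain(\A)$ and is counted by $q$.

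With this direction fix in place, the correctness argument of Proposition~\ref{prop:ptimeHard_atomic} carries over almost verbatim. The helper elements $\mathit{t_1}, \mathit{t_2}, \mathit{f}$ (with the $P$-edges of their gadget reversed) still admit the labeling $\mathit{t_1}, \mathit{t_2} \in T$, $\mathit{f} \in F$, so if $\C$ is valid all $\T_2'$-constraints can be satisfied by classifying existing $P$-edges as $P_T$ or $P_F$, giving $|P^\I|=|P^\A|$. Otherwise at least one additional $P$-edge is required, and equality $|P^\I|=|P^\A|+1$ is attainable, e.g., by adding $P(T_O, \mathit{f})$. Since every counted edge $(x_1,x_2)$ satisfies $x_1 \in \domain(\A)$, we conclude that $\C$ is not valid iff $\certCard(q, \tup{\T,\A}) = |P^\A|+1$, yielding the desired \LOGSPACE reduction.
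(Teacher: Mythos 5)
Your adaptation is correct and follows exactly the route the paper indicates for this result: reuse the NAND-circuit reduction of Proposition~\ref{prop:ptimeHard_atomic}, reorient the encoding so that the chase witnesses required by the number restrictions become $P$-\emph{successors} of ABox individuals, and root the counting query at a fresh constant via an auxiliary role (here $S(a_0,x_1),P(x_1,x_2)$), so that exactly the $P$-edges with ABox sources are counted and the threshold $|P^{\A}|{+}1$ still characterizes invalid circuits. One small refinement: your statement that \emph{every} new $P$-edge has its source in $\domain(\A)$ is too strong (an anonymous witness placed in $F$ spawns further edges with anonymous sources), but the argument only needs that any edge repairing a constraint of an ABox individual originates at that individual, and that your one-extra-edge model (adding $P$ from the output gate to $f$) uses no anonymous elements, both of which hold.
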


% \begin{proof}[Proof (sketch).]
%   The proof of Proposition~\ref{prop:ptimeHard_atomic} can be immediately adapted so that the reduction to \countP is to a query which is rooted, connected and linear.
  % The proof of Proposition~\ref{prop:ptimeHard_atomic} can be immediately adapted by extending the ABox $\A$ with a fresh constant $a \in \NC$ s.t. $a$ has every individual in the domain of $\A$ (except for the target gate) as a $P$-successor,
  % with $P$ a fresh binary predicate.
  % Let $\A'$ be this extended ABox,
  % and let $q$ be the query $q() \rdef P(a, x_1), P(x_1, x_2)$.
  % Then $\C$ is not a valid circuit iff $|P^{\A'}|+ 1 \in \countCertAns(q, \tup{\T,\A'})$.
%\end{proof}

We now focus on positive results, and
% concrete
rewriting algorithms.

\subsection{Universal Model}

We follow the notion of \emph{universal model} proposed by \citeasnoun{NKKK*19}:
a model $\I$ of a KB $\K$ is \emph{universal} for a class of queries $\Q$ iff
$\ans(q,\I) = \certAns(q,\K)$ holds for every $q \in \Q$.
\citeasnoun{NKKK*19} and \citeasnoun{CNKK*19} investigated the existence of a
universal model for queries evaluated under bag semantics.
As we discussed in Section~\ref{sec:sota}, these results carry over to the setting of count semantics,
but only for ontology languages that disallow existential restriction on
the LHS of ontology axioms.
The existence of such model was proved over the class \CQR, for the DL-Lite$^b$
members up to
\dlliterb \cite{NKKK*19} and \dllitefb \cite{CNKK*19}, with some
syntactic restrictions. It was also shown that \CQR queries can be
rewritten into (\emph{BCALC}) queries to be evaluated over the (bag) input ABox.
Neither of these logics is able to encode numbers in the TBox though,
therefore they cannot capture statistical information about missing data.
And as discussed in the introduction, this information may be important in
some applications \cite{ChMe16}, and is one of the motivations behind our work.
Note also that both logics allow for existentials on the LHS of axioms,
and therefore these results do not carry over to count semantics.

Our first result shows the existence of a universal model for
\CQCR and \dllitecoreNmin, and queries evaluated under count semantics.
Precisely,
the canonical model $\can$
obtained via the \emph{restricted chase} from \citeasnoun{CDLLR13}
and \citeasnoun{BoAC10}
%constructed by the chase procedure
is a universal model.
From now on, we denote by $\chase_{i}(\K)$ the set of atoms obtained after
applying the $i$-th chase step over the KB $\K$, and by
$\chase_{\infty}(\K)$ the (possibly infinite) set of atoms obtained by an unbounded number of applications.

\begin{proposition} \dllitecoreNmin has a universal model w.r.t.\
  \countP over \CQCR queries.
\end{proposition}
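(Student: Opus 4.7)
The plan is to take $\can$ to be the restricted chase of $\K$, and argue that it already serves as the universal model. By the standard reduction described earlier for boolean queries, it suffices to show, for every boolean $q \in$~\CQCR,
\[
 |\matches(q,\can)| \;=\; \min\{\,|\matches(q,\J)| \mid \J \text{ is a model of } \K\,\}.
\]
The direction ``$\geq$'' is immediate since $\can$ is itself a model, so the real content is to show $|\matches(q,\can)| \leq |\matches(q,\J)|$ for every model $\J$.

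I would prove this inequality by exhibiting an explicit injection $\iota \colon \matches(q,\can) \to \matches(q,\J)$. The construction leverages the tree-like shape of the anonymous part of $\can$: each anonymous element $e$ is born at a specific step of the restricted chase, as one of $n$ fresh $R$-siblings required to witness an axiom of the form $B \sqsubseteq \GEQ{n}{R}$ applied to a unique predecessor $\pi(e) \in \Delta^{\can}$. Since $\J$ also satisfies that axiom, the image of $\pi(e)$ under a fixed homomorphism $h \colon \can \to \J$ has at least $n$ $R$-successors in $\J$, and I fix, once and for all, an injective assignment sending those $n$ anonymous siblings in $\can$ to $n$ distinct such successors in $\J$. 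Given a match $\rho \in \matches(q,\can)$, I then define $\iota(\rho)$ variable by variable: variables whose $\rho$-value is an ABox constant are sent to themselves, while variables mapped to an anonymous element $e$ are routed through the fixed per-instance witness map. Compatibility with the role atoms of $q$, and therefore the fact that $\iota(\rho)$ is a match in $\J$, follows from $h$ being a homomorphism.

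The main obstacle, and the only place where both the connectedness and the rootedness of $q$ enter essentially, is proving that $\iota$ is injective. Suppose $\rho_1,\rho_2$ are distinct matches in $\can$ with $\iota(\rho_1)=\iota(\rho_2)$, and let $x$ be a variable with $\rho_1(x) \neq \rho_2(x)$. Rootedness guarantees that the connected component of $x$ in the Gaifman graph of $\body(q)$ contains a constant $c$, and connectedness guarantees that every variable lies in this single component, giving a path of role atoms from $c$ to $x$. I would then propagate equality along this path: since the anonymous part of $\can$ is a tree rooted in the ABox, the next vertex visited in $\can$ is uniquely determined by its predecessor together with the chase step that created it; injectivity of the per-instance witness map then converts equality of the $\J$-values into equality of the $\can$-values at each step, until $x$ is reached and $\rho_1(x)=\rho_2(x)$ follows, contradicting the choice of $x$.

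The hardest technical point I anticipate is the behaviour of this propagation at the interface between the ABox and the anonymous tree, since a single path from $c$ to $x$ may re-enter the ABox repeatedly. This requires a small case analysis on whether each intermediate vertex is named or anonymous, and it is precisely where number restrictions with $n > 1$ --- the main novelty of \dllitecoreNmin over weaker DL-Lite fragments for which similar results are known --- must be handled carefully: they are what force multiple parallel anonymous siblings and hence multiply matches in $\can$, but the tree structure of the chase keeps the per-instance injections disjoint in $\J$, so no machinery beyond the restricted chase itself should be required.
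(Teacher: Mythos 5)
Your high-level route is the same as the paper's: the paper also takes the restricted chase $\can$ as the universal model, fixes a homomorphism from $\can$ to an arbitrary model $\I$, and reduces the claim to showing that matches of $q$ in $\can$ can be injected into matches in $\I$, with rootedness, connectedness, the tree shape of the restricted chase, and the absence of role inclusions in \dllitecoreNmin doing exactly the work you assign to them (the paper phrases this as $|\matches(q,\can)| \le |\tau(\matches(q,\can))| \le |\matches(q,\I)|$ for a suitably chosen homomorphism $\tau$). So the plan is aligned; the problem is one step of your construction.

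The gap is in how you define $\iota$: you anchor the witness choice for an anonymous element $e$ at $h(\pi(e))$, i.e.\ the siblings of $e$ are sent to distinct $R$-successors of $h(\pi(e))$ in $\J$, while a variable that $\rho$ maps to $\pi(e)$ (when $\pi(e)$ is itself anonymous) is sent by $\iota$ not to $h(\pi(e))$ but to the witness previously chosen for $\pi(e)$, which is in general a different element of $\J$. Consequently, for a match descending two or more levels into the anonymous part --- e.g.\ $q(x) \rdef P_1(x,y), P_2(y,z)$ with $\T = \{A \sqsubseteq \GEQ{2}{P_1},\ \exists P_1^- \sqsubseteq \GEQ{2}{P_2}\}$ and $\A = \{A(a)\}$ --- the pair $(\iota(\rho)(y),\iota(\rho)(z))$ is only guaranteed to lie in $P_2^{\J}$ when its first component is $h(\rho(y))$, so $\iota(\rho)$ need not be a match in $\J$, and the sentence ``compatibility with the role atoms follows from $h$ being a homomorphism'' does not go through. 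The repair is to abandon the mix of $h$ with local patches and instead build a single map $g:\Delta^{\can}\to\Delta^{\J}$ by recursion on the depth of the anonymous forest: $g$ is the identity on $\NI$, and the fresh $R$-siblings created under $e$ by an axiom $B \sqsubseteq \GEQ{n}{R}$ are sent injectively to distinct $R$-successors of $g(e)$ (which exist since $g(e)$ inherits membership in $B$, because the concepts holding at an anonymous element of $\can$ are exactly those entailed by the role through which it was created); at the ABox interface choose these witnesses distinct from the named $R$-successors of $e$, which is possible precisely because the restricted chase creates only $n-m$ fresh siblings when $m$ named $R$-successors are already present. With such a sibling-injective homomorphism $g$, your propagation of equality from a constant along the connected query (rootedness supplying the constant after grounding the distinguished variables) is sound, including the named/anonymous mixed case you anticipated; note also that the uniqueness of the role label on each anonymous edge, i.e.\ the absence of role inclusions, is silently used here and is exactly the restriction the paper flags.
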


\begin{proof}[Sketch]
    Consider a \CQCR query $q$ and a KB $\K$, and let
    $\matches(q, \can)$ denote the set of all matches for $\body(q)$ over the
    canonical model $\can := \chase_{\infty}(\K)$.
    Let $\I$ be a model of
    $\K$. Then there must exist a homomorphism $\tau$ from $\can$
    to $\I$.
    One immediately obtains that
    $\tau(\matches(q,\can)) \subseteq \matches(q,\tau(\can))$,
    and therefore \ei $|\tau(\matches(q,\can))| \le |\matches(q,\tau(\can))|$.
    Then relying on the
    fact that $q$ is rooted, that the chase is restricted, and
    that \dllitecoreNmin does not allow for role subsumption,
    it can proven  that \eii $|\matches(q,\can)| \le |\tau(\matches(q,\can))|$.
    So from \ei and \eii,
    $|\matches(q,\can)| \le |\matches(q,\tau(\can))|$ holds.
    Then,
    % \st{since $\can$ is canonical}
    since $\tau$ is a homomorphism from $\can$ to $\I$,
    $\tau(\can) \subseteq I$ must hold,
    and therefore $|\matches(q,\tau(\can))| \le |\matches(q,\I)|$. We
    conclude that $|\matches(q,\can)| \le |\matches(q,\I)|$.
\end{proof}

\subsection{Rewriting for \dllitecoreNmin}

We introduce \perfectrefcount, a rewriting algorithm for $\dllitecoreNmin$
inspired by \perfectref \cite{CDLLR06}, and show its correctness. There is a
fundamental complication in our setting, of which we provide an
example. Consider a CQ $q$, a \dllitecoreNmin KB
$\K$, and a query $q'$ among those produced by \perfectref or any other
rewriting algorithm for CQs. Then, each match $\omega'$ for $q'$ in
$\inter(\A)$ can be extended to the anonymous individuals so as to form a
``complete'' match $\omega$ for $q$ in $\inter(\chase_\infty(\K))$ in a
certain number of ways (dictated by the axioms in the ontology). From now on, we call such number the \emph{anonymous
  contribution relative to $q'$}.
The following example shows that the anonymous contribution is related to the
number restrictions occurring in $\K$.

\begin{example}
  Consider the query $q(x) \rdef P(x,y)$, and the KB
  $\K = \tup{\{A \sqsubseteq \GEQ{3}{P}\},\, \{A(a)\}}$. Starting from $q$,
  % and the axiom,
  \perfectref will produce, as part of the final rewriting, a query
  $q'(x) \rdef A(x)$.  Note that there is a single match
  $\mu = \{x \mapsto a\}$ for $q'$ over $\inter(\A)$, and that $\mu$ can be
  extended into exactly three matches for $q$ in $\inter(\chase_\infty(\K))$,
  by mapping variable $y$ into some anonymous individual.
\end{example}

To deal with the fact that
% address our scenario in which
the anonymous contribution to a count is a non-fixed quantity that depends on
the axioms in the ontology, our algorithm is substantially different from
\perfectref and significantly more complicated.
% the changes to the standard \perfectref are quite substantial and highly
% non-trivial. Our algorithm
It is also not related to the one by \citeasnoun{NKKK*19}, which is based on
\emph{tree-witness rewriting} \cite{KiKZ12} rather than on \perfectref, and
was not designed for
% dealing with
settings where the anonymous contribution
is a non-fixed quantity.

Given a CQ $q$ and a TBox $\T$, \perfectrefcount produces a \emph{set}
$\Q'$ of queries such that, for any ABox $\A$,
$\ans(\Q',\inter(\A)) = \certAns(q, \tup{\T,\A})$.  Each query in $\Q'$ comes
with a \emph{multiplicative factor} that captures the anonymous contribution of
each match for that query.  Queries in $\Q'$ are expressed in a \emph{target
 (query) language}, for convenience named \targetLanguage, which is a
substantial enrichment of the one introduced in
Section~\ref{sec:preliminaries}, but has a straightforward
translation into \sql.  Note that we use \targetLanguage only to express
the rewriting,
% : we do not allow it as a language to express user queries over
% the KBs, which are still plain CQs.
while user queries over the KB are still plain CQs.

Following \citeasnoun{CoNS07}, \targetLanguage allows one to explicitly specify
a subset of the non-distinguished variables, called \emph{aggregation
 variables}, for which we count the number of distinct mappings (so far, we
implicitly assumed all non-distinguished variables as aggregation variables).
It also allows for a \emph{multiplicative factor} to be applied after the
(count) aggregation operator,
a restricted use of disjunctions, equalities
between terms, atomic negation, and the use of \emph{nested aggregation}
in the form of a special $\exists^{=i}$ operator (which intuitively corresponds to a
nested aggregation plus a boolean condition requiring the result of the
aggregation to be equal to $i$).

A \emph{query} in \targetLanguage is a pair $\tup{Q(\x, \aggCount(\y) \cdot
 \alpha), \Pi}$, where variables $\x$ are called \emph{group-by variables},
variables $\y$ are called \emph{aggregation variables} (intuitively,
$\aggCount(\y)$ corresponds to the SQL construct \textsc{count distinct}), $\x
\cap \y = \emptyset$, $\alpha \in \mathbb{N}$ is a positive multiplicative
factor, and $\Pi$ is a set of \emph{rules} $\setB{q^k(\x : \y) \rdef \psi^k}{1
 \le k \le m}$.  The symbol ':' in the head\footnote{Head and body of a rule
 are defined as for CQs.} of each rule is to distinguish between group-by and
aggregation variables. Each $\psi^k$ in $\Pi$ is a conjunction
$\psi^k_{pos} \land \psi^k_{neg} \land \psi^k_{eq} \land \psi^k_{\exists}$ of
positive atoms ($\psi_{pos}^k$),
negated atoms ($\psi_{neg}^k$),
equalities between terms ($\psi_{eq}^k$),
and special atoms ($\psi_{\exists}^k$), which
we call \emph{$\exists$-atoms}, of the form  $\exists^{=i}_z.\ P(w,z)$, where
$i \in \mathbb{N}_0$, $w \in \x \cup \y$, and $z$ is a variable that occurs
only once in $q$.

A mapping $\rho$ is a match for $\psi^k$ in an interpretation $\I$ if:
\begin{compactitem}
\item $\rho(\psi^k_{pos}) \subseteq \I$;
\item $\rho$ satisfies all equalities in $\psi^k_{eq}$;
%\item $\lnot\exists \rho' \supseteq \rho : \rho'(\psi_{neg}) \cap \I \neq \emptyset$,}
\item there is no $\rho' \supseteq \rho$ such that $\rho'(E(\z)) \in \I$, for some $\lnot E(\z)$ in $\psi^k_{neg}$;
\item for each $\exists^{=i}_y R(w,z)$ in $\psi_{\exists}$, there are
  \emph{exactly} $i$ mappings $\rho_1, \ldots, \rho_i$ such that, for $j \in
  \{1, \ldots, i\}$ we have that
  \begin{inparablank}
  \item $\rho_j \supseteq \rho$ and
  \item $\rho_j(R(w,z)) \subseteq \I$.
  \end{inparablank}
\end{compactitem}
A mapping $\rho$ is a match for $\Pi$ in an interpretation $\I$, if for some
$q(\x : \y) \rdef \psi$ in $\Pi$ it is a match for $\psi$ in $\I$.  A mapping
$\omega$ is an
\emph{answer} to $\tup{Q(\x, \aggCount(\y) \cdot \alpha), \Pi}$ over $\I$ with
cardinality $k \cdot \alpha$ iff there are \emph{exactly} $k$ mappings
$\eta_1, \ldots, \eta_k$ such that, for $i \in \{1,\ldots, k\}$:
\begin{compactitem}
\item $\omega = \eta_i|_{\x}$, and
\item $\eta_i$ can be extended to a match $\rho$ for $\Pi$ in $\I$ s.t.\ $\rho|_{\x \cup \y} = \eta_i$.
\end{compactitem}

Note that our semantics also captures the case when the operator $\aggCount()$ is
over an empty set of variables (in that case, the $k$ above would be equal to
$1$).
This technicality is necessary for the presentation of the algorithm.

We are now ready to introduce \perfectrefcount.  Consider a \emph{satisfiable}
knowledge base $\K = \tup{\T, \A}$, and a \CQCR query $q(\x) \rdef
\psi(\x,\y)$. \perfectrefcount takes as input $q$ and $\T$ and initializes the
result set $\Q$ as
\[
  \{\tup{Q(\x, \aggCount(\y) \cdot 1), \{q(\x :\y) \rdef \psi(\x,\y)\}}\}.
\]
Then the algorithm expands $\Q$ by applying the rules \atomrewrite, \reduce,
\gealpha, and \gebeta until saturation, with priority for \atomrewrite and
\reduce. The set $\Q'$ obtained at the end of this process does not
necessarily contain just queries (in the sense of our definition above), and
hence needs to be \emph{normalized} (see later).

To define the rules of the algorithm, we first need to introduce some notation.
In the following, $P^-(w,z)$ stands for $P(z,w)$.  Hence, also $R(w,z)$ when
$R = P^-$ stands for $P(z,w)$.  We use  '$\_$' to denote a fresh
variable introduced during the execution of the algorithm. For a basic concept
$B$, notation $\xi(B,w)$ stands for $B(w)$ if $B \in \NC$, or $R(w,\_)$, if
$B = \GEQ{1}{R}$. Given a set $\B$ of basic concepts, $\sat_\T(\B)$ is defined
as the set of basic concepts $\setB{B'}{B' \sqsubseteq_\T B, B \in \B}$.  If
$\phi$, $\psi$ are two conjunctions of atoms and $a$ is an atom in $\phi$, we
use $\phi[a/\psi]$ (resp., $\phi[a/\top]$) to denote the conjunction identical
to $\phi$, but where $a$ is replaced with $\psi$ (resp., $a$ is deleted from
$\phi$).  By extension, if $r$ is a rule, $r[a/\psi]$ denotes the rule
$\head(r) \rdef \body(r)[a/\psi]$. If $B$ is a basic concept and $R$ a role,
$\card_\T(B,R)$ denotes the maximal $n$ s.t.\
$B \sqsubseteq \GEQ{n}{R} \in \T$. A variable $x$ is \emph{bound in a rule $r$}
if it is a group-by variable, or if it occurs more than once in the set of
positive atoms of $r$. We say that $x$ is \emph{$\alpha$-blocked} if it is
bound, or if it occurs more than once in $\head(r)$, or if it occurs in some
$\exists$-atom in $\body(r)$. Finally, $x$ is \emph{$\beta$-blocked} if it is
bound, or if it occurs more than once in $\head(r)$, or if it occurs in some
atom of the form $\exists^{=i}_z R(w,z)$ with $i > 0$.

To ease the presentation, for the exposition of rules \gealpha and \gebeta we
will ignore details concerning the renaming of variables, and assume that
variables belong to the input query.%
%\dc{Removed the following: Observe that the case with renaming can be handled
% by exploiting the position of such variables in the head of rules, and their
% correspondence with the positions in $\x$ and $\y$.}

\myparagraph{\atomrewrite ($\rightsquigarrow_{AR}$).} ${\{s\qq_1,
 \ldots,\qq_k\}\! \rightsquigarrow_{AR}\! \{\qq_1, \ldots,
 \qq_{k-1}, \qq_k'\}}$ if
%
%\begin{compactitem}
%\item $A(x) \in \body(r)$, where $A \in \NC$ and $r \in \Pi_k$.
%\end{compactitem}
\begin{compactitem}
\item $\qq_k= \tup{Q(\x, \aggCount(\y) \cdot \alpha), \Pi}$;
\item for some $r \in \Pi$, for some $E(\z) \in \body(r)$, either:
  \begin{compactitem}
  \item $E(\z)$ is of the form $A(z)$, and $B \sqsubseteq A \in \T$, or
  \item $E(\z)$ is of the form $R(w,z)$, $B \sqsubseteq \GEQ{n}{R} \in \T$, $z$
    is an unbound variable, and if $\head(r) = q(\s :\t)$, then $z \notin \t$;
%    \dc{Since $z$ is unbound, it cannot occur in $\t$.  Right? Then why do we
%      specify it?
%    Davide> Wrong: unbound definition means not group-by. I have changed the definition of bound to make this clearer (wrote group-by in place of distinguished).}
\end{compactitem}
$\mathfrak{q'}_k = \tup{Q(\x, \aggCount(\y) \cdot \alpha),  \Pi \cup \{r[E(\z)/\xi(B,w)]\}}$.
\end{compactitem}
\myparagraph{\reduce ($\rightsquigarrow_{R}$).} $\{\qq_1, \ldots, \qq_k\}
\rightsquigarrow_{R} \{\qq_1, \ldots, \qq_{k-1}, \qq_k'\}$ if
\begin{compactitem}
\item $\qq_k=\tup{Q(\x, \aggCount(\y) \cdot \alpha), \Pi}$;
\item $\{E_1(\mathbf{\z_1})$, $E\mathbf{\z_2})\} \subseteq \body(r)$ for some $r \in \Pi$;
\item $\sigma$ is a most general unifier for $E(\z_1)$ and $E(\z_2)$;
  with the following restrictions:
  \begin{compactitem}
  \item a variable in $\x$ can map only to a variable in $\x$;
  \item a variable in $\y$ can map only to a variable in $\x \cup \y$;
  \item $\domain(\sigma) \subseteq \z_1 \cup \z_2$ and $\range(\sigma) \subseteq \z_1 \cup \z_2$.
  \end{compactitem}
\item $\qq'_k = \tup{Q(\x, \aggCount(\y) \cdot \alpha), \Pi \cup \{\sigma(r[E(\z_2)/\top])\}}$.
\end{compactitem}

\myparagraph{\gealpha ($\rightsquigarrow_{\ge_\alpha}$).} $\{\qq_1, \ldots, \qq_k\} \rightsquigarrow_{\ge_\alpha} \{\qq_1, \ldots, \qq_k\} \cup \Q_k$ if
\begin{compactitem}
\item $\qq_k = \tup{Q(\x, \aggCount(\y) \cdot \alpha), \Pi}$
% \item[$\star$] \davide{\st{$R(x,y) \in \body(r)$ with $y$ non-blocked, for some $r \in \Pi$ such that $\head(r) = q(\s:\t)$ and $y \in \t$;}}
\item[$\star$] $R(w,z)$, with $w,z \in \x \cup \y$, is an atom such that%\footnote{The containment of an atom $R(w,z)$ in a rule $r$, as well as all conditions involving such atom in rules \gealpha and \gebeta, are always intended for a suitable variable renaming of variable $w$ (dictated by its position in $\head(r)$). Note that, if \gealpha (or \gebeta) is applicable, then $z$ is not $\alpha$-blocked and therefore it cannot be a renamed variable.}:
  \[\Pi' \!:=\! \left\{\!
      R(w,z) \in \body(r)
      \middle\vert\!\!
      \begin{array}{l}
        r \in \Pi \text{ and } \\
        y \text{ is a non-$\alpha$--blocked}\\
        \text{aggregation variable}
      \end{array}
      \!\!\right\} \neq \emptyset
  \] % \head(r) = q(\s:\t), and $y \in \t$;}
\item Let $\psi_\exists$ be the conjunction of all exists-atoms in any rule
    $r \in \Pi$ (by construction, such conjunction is the same for all rules in $\Pi$).
    Then the conjunction $\psi_\exists \land \exists^0_y R(w,z)$ (seen as a
    set) must not appear in other rules from
    $\{\qq_1, \ldots, \qq_k\}$;
\item $\B_k$ is the maximal set of basic concepts $B$ such that $B \sqsubseteq \GEQ{n_B}{R} \in \T$,
  for some $n_B$;
\item [$\diamond$] $\Q_k$ is defined as follows.
  First, let $\partition(\B_k)$ denote the set of all pairs $\tup{\B^1,\B^2}$ such that
  $\B^1 \subseteq \B_k$,
  $\B^2 \subseteq \B_k$,
  $\B^1 \neq \emptyset$,
  and $\sat_\T(\B^1) \cap \sat_\T(\B^2) = \emptyset$.
  Then,
  for a set $\B$ of basic concepts, let $\comb_\T(\B)$ denote the cartesian
  product $\prod_{B \in \B} \sat_\T(B)$.
  And if $\B$, $\B'$ are two sets of basic concepts,
  we call \emph{atomic decomposition} the formula $\atDec(\B, \B')$, defined
  as:
  \[
    \textstyle
    \bigwedge_{B \in \B} \xi(B,w)
    \land \bigwedge_{B \in \B',B' \in\ \sat_\T(B)}
    \lnot \xi(B',w)
  \]
  If $\psi$ is a formula, let $\rpl(r,\psi)$ designate the rule:
  \[q(\s:\tu \setminus \{y\}) \rdef \body(r)[R(w,z)/\psi]\]
  Finally, if $j$ is an integer, let $\qh(j)$ be the expression:
  % \[Q(\x, \aggCount(\y \setminus_{\mathsf{p}} \pos(y,\t)) \cdot j \cdot \alpha)\]
  \[Q(\x, \aggCount(\y \setminus \{y\}) \cdot j \cdot \alpha)\]
  We can now define  $\Q_k$ as:\\
  \[\bigcup\limits_{\substack{(\B^1,\B^2) \in \partition(\B_k), \\
        n = \max_{B \in \B^1} \card(B),\\
        i \in [0 .. n-1]
      }}
    \begin{array}[t]{@{}l}
      \hspace{-9mm}\{\tup{\qh(n{-}i), \{\rpl(r, \atDec(\B^3, \B^2) \land \exists^{=i}_y R(w,z)) \\
      \hspace{1.2cm} \mid{} \B^3 \in \comb_\T(\B^1), r \in \Pi'\}}\}
    \end{array}\]
\end{compactitem}

\myparagraph{\gebeta ($\rightsquigarrow_{\ge_\beta}$).} This rule is
defined as $\rightsquigarrow_{\ge_\alpha}$, but with the difference
that conditions $\star$ and $\diamond$ are as follows:
\begin{compactitem}
% \item[$\star$] \davide{\st{$R(x,w) \in \body(r)$ with $w$ a bound, and all other occurrences of $w$ in $\body(r)$ are only in atoms of the form $\exists^{=0}z R(w,z)$.}}
\item[$\star$] $R(w,z)$ is an atom such that:
  \[\Pi' \!:=\! \left\{
      R(w,z) \in \body(r)
      \middle\vert
      \begin{array}{l}
        r \in \Pi \text{ and } \\
        y \text{ is a non-$\beta$--blocked}\\
        \text{aggregation variable}
      \end{array}
    \right\} \neq \emptyset
  \] % \head(r) = q(\s:\t), and $y \in \t$;}
\item[$\diamond$] As item $\diamond$ for $\gealpha$, with the additional condition
  that all atoms in which variable $y$ occurs are removed.
\end{compactitem}

Note that, once all rules have been applied to saturation,
the resulting set $\Q'$ is technically not yet a set of queries, because of renamed variables, constants, or repetitions in the head of a rule. To transform each element $\tup{Q(\x, \aggCount(\y) \cdot \alpha), \Pi}$ of $\Q'$ into a query, we \emph{normalize} it by renaming the variables in rules in $\Pi$, based on their positions,
according to $\x$ and $\y$, and by replacing constants and repeated variables in the head of a rule with suitable equalities in its body.
% of such rule.

The intuition behind \perfectrefcount is the following.
First of all, we observe that the rewriting rules \atomrewrite and \reduce are analogous to their counterparts in the original \perfectref
algorithm. The restrictions on the unifier in \reduce are meant to limit the possible renamings of variables.
%Let us now consider a generic query produced by an application of rules \gealpha or \gebeta:
%
%
%\[((Q_k(\x, \aggCount(\y) \cdot (n_{\max} - i) \cdot \alpha), \Pi)\]
%
% \[q(\x, \aggCount(\y) \cdot (n_{\max} - i) \cdot \alpha) \rdef \psi(\z) \land \varphi_{\exists \z'}(\z'').\]
Rewriting rules \gealpha and \gebeta
extend the way existential quantification is handled in \perfectref,
% are an adaptation of the rewriting rules handling existential quantification
% in \perfectref,
and are the only ones eliminating aggregation variables from the rules in $\Pi$.
Each time one such variable is eliminated, it can be
potentially mapped in $(n{-}i)$ different ways into the anonymous
part of the canonical model. The $\exists$-atoms, together
with the relative atomic decompositions, check the number $i$ of
mappings that are already present in the ABox. The factor $\alpha$ keeps
track of the number of ways variables eliminated in previous steps can be
mapped into the anonymous part. Hence, the quantity
$(n{-}i) \cdot \alpha$ captures the anonymous contribution relative to the query.
\begin{figure}
  \centering
  \tikzstyle{obj}=[circle, draw=black,fill=black,inner sep=.05cm]
  \begin{tikzpicture}[font=\footnotesize\sffamily]
    \node[obj, label={$a$}, label={below}:{$A$}] (a) at (0,0) {} ;

    \node[obj, label={$b$}] (b) at (1.5,-0.5) {} ;
    \node[obj] (c) at (-1.5,-0.5) {} ;

    \node[obj, label={right}:{$d$}] (d) at (3,0) {} ;
    \node[obj, label={right}:{$e$}] (e) at (3,-0.5) {} ;
    \node[obj] (f) at (3,-1) {} ;

    \node[obj] (g) at (-3,0) {} ;
    \node[obj] (h) at (-3,-0.5) {} ;
    \node[obj] (i) at (-3,-1) {} ;

    \draw[-stealth'] (a) -- node[above] {$P_1$} (b);
    \draw[-stealth',dashed] (a) -- node[above] {$P_1$} (c);

    \draw[-stealth'] (b) -- node[above] {$P_2$} (d);
    \draw[-stealth'] (b) -- node[above,shift={(3.5mm,-2pt)}] {$P_2$} (e);
    \draw[-stealth',dashed] (b) -- node[below] {$P_2$} (f);

    \draw[-stealth',dashed] (c) -- node[above] {$P_2$} (g);
    \draw[-stealth',dashed] (c) -- node[above,shift={(-3.5mm,-2pt)}] {$P_2$} (h);
    \draw[-stealth',dashed] (c) -- node[below] {$P_2$} (i);
  \end{tikzpicture}
  \caption{Chase model of Example~\ref{e:rewriting-ex}. Solid arrows represent
   the information in the ABox, whereas dashed lines represent information
   implied by the ontology.}
  \label{f:model-example}
\end{figure}
\begin{example}
  \label{e:rewriting-ex}
  Consider the KB $\K=\tup{\T,\A}$, with
  \[
    \T =
    \left\{\!\!\begin{array}{l@{\hspace{1ex}}l@{\hspace{1ex}}l}
        A & \sqsubseteq & \ge_2 P_1, \\
        \exists P_1^- & \sqsubseteq & \ge_3 P_2
      \end{array}\!\!
    \right\},
    \quad
    \A =\left\{\!\!
      \begin{array}{l@{\hspace{1ex}}l@{\hspace{1ex}}l}
        A(a),~ P_1(a,b), \\
        P_2(b,d),~ P_2(b,e)
      \end{array}\!\!
    \right\}
  \]
  and the input CQ $q(x) \rdef A(x), P_1(x,y_1), P_2(y_1,y_2)$.  The chase
  model of $\K$ is represented in Figure~\ref{f:model-example}. The
  initialization step sets
  $\Q = \{\tup{Q(x,\aggCount(y_1,y_2) \cdot 1)$, $\{q(x{:}y_1,y_2) \rdef A(x),
   P_1(x,y_1), P_2(y_1,y_2)\}}\}$. Since $y_2$ is unbound, we can apply rule
  $\gealpha$, which produces the following set $\Q'$:
  \[\left\{\hspace{-.2cm}\begin{array}{l}
        \tup{Q(x,\aggCount(y_1,y_2) \cdot 1), \\
         \quad \{q(x : y_1,y_2) \rdef A(x), P_1(x,y_1), P_2(y_1,y_2)\}},\\
        \tup{Q(x, \aggCount(y_1) \cdot 3 - 0), \\
         \quad \{q(x : y_1) \rdef A(x), P_1(x,y_1), P_1(\_,y_1), \exists^{=0}_z P_2(y_1,z)\}},\\
        \tup{Q(x, \aggCount(y_1) \cdot 3 - 1), \\
         \quad \{q(x : y_1) \rdef A(x), P_1(x,y_1), P_1(\_,y_1), \exists^{=1}_z P_2(y_1,z)\}},\\
        \tup{Q(x, \aggCount(y_1) \cdot 3 - 2), \\
         \quad \{q(x : y_1) \rdef A(x), P_1(x,y_1), P_1(\_,y_1), \exists^{=2}_z P_2(y_1,z)\}}\\
      \end{array}\hspace{-.2cm}\right\}
  \]
  Rule \reduce can now be triggered by the second, the third, and the last rule
  in $\Q'$.  In particular, an application of $\reduce$ on the second query
  leads to the query:
  % DAVIDE> REDO this query, looks ugly.
  \[\begin{array}{l}
      \langle Q(x, \aggCount(y_1) \cdot 3 - 0), \\
      \quad \{q(x : y_1) \rdef A(x), P_1(x,y_1), P_1(\_,y_1), \exists^{=0}_z P_2(y_1,z), \\
      \quad ~~q(x : y_1) \rdef A(x), P_1(x,y_1), \exists^{=0}_z P_2(y_1,z)\}\rangle\\
    \end{array}\]
  On such query we can apply rule $\gebeta$ producing, among others, the following query:
                \[\begin{array}{l}
                    \tup{Q(x, 1 \cdot (2-1) \cdot 3), \{q(x : \tup{}) \rdef A(x), \exists^{=1}_z P_1(x,z)\}}\\
                  \end{array}
                \]
                Let us analyze the queries produced by \perfectrefcount that return
                at least one answer. The query after the initialization step returns
                the number of paths $(x,y_1,y_2)$ in $\A$ conforming to the structure
                dictated by the body of the input query. Since there are two such
                paths, such query returns the answer $\tup{x \mapsto a, 2}$. The
                queries generated by $\gealpha$ check for all sub-paths $(x,y_1)$ of
                $(x,y_1,y_2)$ such that $x$ is an element of $A$, $y_1$ is a
                $P_1$-successor of $x$, and $y_1$ has less $P_2$-successors in the ABox
                than what the TBox prescribes.  There is one such path in $\can$,
                namely the one terminating in node $b$ that has only two
                $P_2$-successors in $\A$.  This path is captured by the fourth query
                in $\Q'$, which returns as answer $\tup{x \mapsto a, 1}$: indeed,
                there is a single way of extending this path into the anonymous part.
                The queries generated by $\gebeta$ are to be interpreted in a similar
                way. In particular, the query we highlighted retrieves the individual
                $a$, since this node has only one $P_1$-successor in $\A$ but it
                should have at least two $P_1$-successors according to $\T$. The
                answer to such query is $\tup{x \mapsto a, 3}$. Indeed, there are
                three ways of extending the match $x \mapsto a$ into the anonymous
                part.  Summing up the numbers, we get that our set of queries returns
                the answer $\tup{x \mapsto a, 6}$, which indeed is the answer to our
                input query over the chase model from Figure~\ref{f:model-example}.
              \end{example}

The algorithm terminates because the application of \atomrewrite
and \reduce is blocked upon reaching saturation, and each application of
\gealpha and \gebeta reduces the number of variables in $\psi$ by $1$.
The following lemmas show the correctness of \perfectrefcount.

\begin{lemma}
  \label{lemma:soundness}
  Consider a \dllitecoreNmin knowledge base $\K = \tup{\T, \A}$ and a
  connected, rooted CQ $q$. Consider a query
  $Q'$ belonging to the output of \perfectrefcount over $q$ and $\K$. Then, each match $\omega'$ for $Q'$ in
  $\inter(\A)$ can be extended into a match $\omega$ for $q$ in
  $\inter(\chase_{\infty}(\K))$.
\end{lemma}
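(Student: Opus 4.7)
The plan is to argue by induction on the number of rewriting steps used to derive $Q'$ from the initial query $Q_0 = \tup{Q(\x,\aggCount(\y) \cdot 1), \{q(\x:\y) \rdef \body(q)\}}$. The right inductive invariant is slightly stronger than the lemma statement: for every generated query $Q^*$ and every mapping $\omega^*$ whose positive atoms in some rule of $Q^*$ are satisfied in $\inter(\chase_{\infty}(\K))$ and whose $\exists$-atoms and negated atoms are satisfied in $\inter(\A)$, the restriction of $\omega^*$ to the variables of $q$ extends to a match for $\body(q)$ in $\inter(\chase_{\infty}(\K))$. Because $\inter(\A) \subseteq \inter(\chase_{\infty}(\K))$, every $\inter(\A)$-match from the lemma statement qualifies, so the lemma becomes an instance. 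The base case is immediate: the only rule of $Q_0$ has body $\body(q)$ and contains no $\exists$-atoms or negated atoms, so $\omega^*$ itself is the desired chase match.

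For the inductive step, let $r'$ be the rule of $Q'$ newly derived from $r \in \Pi_Q$ and let $\omega'$ satisfy the invariant for $r'$. In the \atomrewrite case, $E(\z) \in \body(r)$ has been replaced by $\xi(B,w)$; the fact that $\omega'$ satisfies $\xi(B,w)$ in the chase together with the triggering axiom ($B \sqsubseteq A$ or $B \sqsubseteq \GEQ{n}{R}$) provides a chase witness $e$ making $E$ hold at $(\omega'(w),e)$. Extending $\omega'$ by this witness yields a mapping satisfying every positive atom of $r$ in the chase, while the $\exists$- and negated atoms of $r$ (shared with $r'$) remain satisfied in $\inter(\A)$; the inductive hypothesis on $Q$ then concludes. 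In the \reduce case, the side conditions on the most general unifier $\sigma$ preserve the kind (group-by vs.\ aggregation) of each variable, so $\omega'' := \omega' \circ \sigma$ extended by the identity outside $\domain(\sigma)$ is well-defined and matches $\body(r)$ in the chase (the unifier equating $E(\z_1)$ with $E(\z_2)$), with the $\exists$- and negated atoms of $r$ also satisfied in $\inter(\A)$ up to the renaming induced by $\sigma$; the inductive hypothesis applies.

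The main obstacle lies in the \gealpha and \gebeta cases, where a non-$\alpha$-blocked (resp.\ non-$\beta$-blocked) aggregation variable $y$ is eliminated and the unique positive atom $R(w,y)$ in which $y$ occurs is replaced by $\atDec(\B^3,\B^2) \land \exists^{=i}_z R(w,z)$. The atomic decomposition certifies that $\omega'(w)$ belongs to some $B \in \B^3 \subseteq \B_k$ with $B \sqsubseteq \GEQ{n_B}{R} \in \T$, so the restricted chase endows $\omega'(w)$ with at least $n - i$ anonymous $R$-successors, where $n = \max_{B \in \B^1}\card_\T(B,R)$ and $i$ is the exact number of its $\inter(\A)$-successors via $R$. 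We pick any such anonymous element $e$ and set $\omega''(y) := e$. Non-$\alpha$-blockedness ensures that $y$ occurs in no other positive atom of $r$, so $\omega''$ matches the positive atoms of $r$ in the chase. The $\exists$- and negated atoms of $r$ not involving $y$ are inherited from $r'$ and thus satisfied in $\inter(\A)$; for \gebeta, the additional atoms involving $y$ that were dropped from $r$ are $\exists^{=0}$-atoms of the form $\exists^{=0}_{z'}R'(y,z')$, and they hold vacuously at $\omega''(y) = e$ because the anonymous element $e$ has no $\inter(\A)$-successors by construction. The role of the partition of $\B_k$ is to preclude double-counting when later summing the cardinalities across the different queries produced, but for soundness any single $(\B^1,\B^2)$ compatible with $\omega'(w)$ suffices. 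The inductive hypothesis on $Q$ then extends $\omega''$ to the required chase match for $\body(q)$ that agrees with $\omega'$ on the variables of $q$.
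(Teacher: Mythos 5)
Your proof is correct in substance, but it takes a different route from the paper. The paper disposes of this lemma with a one-line sketch, ``a straightforward induction over the number of chase steps,'' i.e., it tracks how atoms of the rewritten query are generated along the construction of $\chase_\infty(\K)$; you instead induct on the number of applications of the rewriting rules that produce $Q'$, with an explicitly strengthened invariant (positive atoms evaluated over $\inter(\chase_\infty(\K))$, $\exists$-atoms and negated atoms over $\inter(\A)$). This hybrid invariant is exactly what is needed to push matches backwards through \atomrewrite, \reduce, \gealpha\ and \gebeta, since the witnesses you reintroduce for eliminated variables live in the anonymous part of the chase while the $\exists^{=i}$ and negated atoms of the parent rule must still be checked against the ABox; your observation that, in the absence of role inclusions in \dllitecoreNmin, the $n-i$ extra $R$-successors guaranteed by the number restrictions are fresh elements of $\NE$, so that dropped $\exists^{=0}$-atoms (and, for the same reason, any dropped negated atoms on the eliminated variable, which you do not mention but which are handled identically) hold vacuously, is the key point and is sound. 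What your route buys is a per-rewriting-rule case analysis that localizes the argument to the algorithm's rules and reuses directly in Lemma~\ref{lemma:anonymous-contribution}; what the paper's chase-step induction buys is independence from the derivation history of $Q'$, treating all rules of $\Pi$ uniformly. Two small bookkeeping points you gloss over: a match for $Q'$ may use a rule inherited unchanged from the parent query (handled trivially by the induction hypothesis, but worth stating), and in the \reduce case the agreement of the final chase match with $\omega'$ on head variables identified by $\sigma$ relies on the normalization step (equalities for repeated head variables); neither affects the validity of the argument.
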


\begin{proof}[Sketch]
The claim can be proved through a straightforward induction over the
number of chase steps.
\end{proof}

The next lemma states that the opposite direction also holds, i.e., that
\emph{all} matches are retrieved.

\begin{lemma}
  \label{lemma:completeness}
  Consider a \dllitecoreNmin knowledge base $\K = \tup{\T, \A}$ and a
  connected, rooted CQ $q$.
  Every match $\omega$ for $q$ in $\inter(\chase_{\infty}(\K))$ is an
  extension of some match $\omega'$ for $Q'$ in $\inter(\A)$, where $Q'$
  belongs to the output of \perfectrefcount.
\end{lemma}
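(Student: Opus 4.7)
The plan is to proceed by induction on the number of aggregation variables $y$ of $q$ with $\omega(y) \in \NE$, using as a secondary measure the maximum chase-depth among the anonymous individuals in $\omega(\var(q))$. For the base case, when no variable is mapped into $\NE$, we have $\omega(\body(q)) \subseteq \inter(\A)$, so $\omega$ itself is a match for the initialization query $\tup{Q(\x, \aggCount(\y) \cdot 1), \{q(\x{:}\y) \rdef \psi(\x,\y)\}}$, which is never removed from $\Q$ during the run of the algorithm.

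For the inductive step, I will select a variable $y$ whose image $\omega(y) \in \NE$ is maximal in the chase tree, let $Y := \omega^{-1}(\omega(y))$, and let $p$ denote the chase-parent of $\omega(y)$. By the structure of the restricted chase for \dllitecoreNmin, the element $\omega(y)$ was produced by firing a unique axiom $B \sqsubseteq \GEQ{n_0}{R}$ at $p$, yielding exactly $n_0 - i$ fresh $R$-successors, where $i$ is the number of named $R$-witnesses of $p$ in $\A$. Since $\omega(y)$ has no descendants in $\omega(\var(q))$ by maximality, every atom of $\body(q)$ mentioning a variable of $Y$ has the shape $A(y')$, $R'(v,y')$, or $R'(y',y'')$ with $y', y'' \in Y$ and $\omega(v) = p$. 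I mirror this inside \perfectrefcount by: \ei using \reduce to unify all variables in $Y$ into a single variable $y$; \eii applying \atomrewrite to every remaining atom mentioning $y$, backed by the concept or role inclusion that justifies the atom in $\can$; and \eiii applying \gealpha or \gebeta (whichever is enabled, depending on the blocking status of $y$ after the previous steps) to eliminate $y$, installing the corresponding $\exists^{=i}_z R(w,z)$ atom and updating the multiplicative factor to $(n_0 - i) \cdot \alpha$. The resulting query has strictly fewer variables whose $\omega$-image lies in $\NE$, or the same number with strictly smaller maximum anonymous depth, so the induction hypothesis yields a query $Q'$ in the output of \perfectrefcount that is matched in $\inter(\A)$ by the restriction of $\omega$ to the surviving variables.

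The main obstacle will be verifying that the \gealpha/\gebeta step selects a branch of $\partition(\B_k)$ that actually corresponds to the ``realized type'' of $p$ in $\can$: we need a pair $(\B^1, \B^2)$ in which $\B^1$ is contained in the $R$-generating basic concepts of $p$ and $\B^2$ is disjoint from them, together with a tuple $\B^3 \in \comb_\T(\B^1)$ whose atomic decomposition $\atDec(\B^3, \B^2)$ holds for $p$ in $\inter(\A)$, and an integer $i$ matching the number of $R$-successors of $p$ actually present in $\A$. Existence of such a branch follows from the maximality of $\B_k$ in the definition of \gealpha/\gebeta, from $\comb_\T$ enumerating all admissible atomic certifications via $\sat_\T$, and from the firing condition of the restricted chase step that produced $\omega(y)$. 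A further subtlety is ensuring that the chosen rule variant (\gealpha versus \gebeta) is indeed applicable: this requires tracking, across the derivation, whether $y$ remains non-$\alpha$-blocked (so that multiple copies of $y$-atoms may survive for later matching) or becomes $\beta$-only (so that all $y$-atoms must be dropped); a case analysis on the local shape of $q$ around $Y$ handles both situations and, combined with Lemma~\ref{lemma:soundness}, yields the two-way correspondence underlying the correctness of \perfectrefcount.
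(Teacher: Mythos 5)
Your inductive step is in the right spirit (fold the deepest anonymous images back into the ABox using \reduce, \atomrewrite and then \gealpha/\gebeta), and it is close to what the paper does, which argues by induction on chase steps following the classical \perfectref completeness proof of Calvanese et al.\ (2006), adapted to matches (rather than answer assignments) and to chase-tree nodes that are sets of atoms. However, your base case is simply false: when $\omega$ maps every variable into $\NI$ it does \emph{not} follow that $\omega(\body(q)) \subseteq \inter(\A)$, because the chase also derives atoms over named individuals. For instance, with $\T = \{A_1 \sqsubseteq A_2\}$, $\A = \{A_1(a)\}$ and the rooted connected query $q(x) \rdef A_2(x)$, the match $\{x \mapsto a\}$ exists in $\inter(\chase_{\infty}(\K))$ but not in $\inter(\A)$; the lemma holds only via the \atomrewrite-rewritten query with body $A_1(x)$. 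So the ``trivial'' bottom of your induction is exactly where the classical part of the completeness argument lives (an induction on the number of chase steps, interleaving \atomrewrite and \reduce to push derived atoms over named individuals back into $\A$), and omitting it leaves a genuine gap rather than a routine detail.

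A second, structural problem: your induction hypothesis is stated for matches of the input CQ $q$, but after one elimination round the object you hand to the hypothesis is (a restriction of) $\omega$ viewed as a match for an \emph{intermediate} query produced by the algorithm --- a \targetLanguage query with a modified head, a multiplicative factor, and $\exists^{=i}$-atoms --- to which the hypothesis as formulated does not apply. To make the induction go through you must generalize the statement to all intermediate queries generated by \perfectrefcount, and you must also be careful about what ``match'' means for the $\exists^{=i}$-atoms: they are designed to count $R$-successors in $\inter(\A)$, so evaluating the intermediate query over the chase (as your recursion implicitly does) is not the right notion. Your closing paragraph correctly identifies the need to hit the right branch of $\partition(\B_k)$ and the right $i$, which is indeed the delicate point of the \gealpha/\gebeta step, but as written the proof needs both the corrected base case and the strengthened induction statement before that analysis can be carried out.
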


\begin{proof}[Sketch]
The proof follows the one by \citeasnoun{CDLLR06}, however one has to pay
attention to the fact that here we deal with matches rather than with
assignments for the distinguished variables.  Another technical difference with
that proof is that in our case the nodes in the \emph{chase tree} are sets of
atoms
% membership assertions
rather than single
% assertions
atoms.
% Therefore, the notion of \emph{witness}
% needs to be reformulated accordingly.
% , that is, a witness must be
% such that that when the considered match is applied on the query,
% the atoms obtained must each correspond to exactly one assertion
% from each chase node contained in the witness.\davide{Unclear and slightly
% wrong as is, reformulate}
\end{proof}

% \jc{In addition to these two lemmas, don't you also need the fact that the
% relation extends/isExtensionOf is a function (in at least one direction)?
% Davide> Did not get yr point}
The last lemma tells us that our way of capturing the anonymous
contribution is indeed correct.

\begin{lemma}
  \label{lemma:anonymous-contribution}
  Consider a \dllitecoreNmin knowledge base $\K = \tup{\T, \A}$ and a
  connected, rooted CQ $q$. Consider a query
  $\tup{Q'(\x, \aggCount(\y) \cdot \alpha), \Pi}$ belonging to the output of
  \perfectrefcount over $q$ and $\K$. Then, each match $\omega'$ for $Q'$ in
  $\inter(\A)$ can be extended into exactly $\alpha$ matches $\omega$ for $q$
  in $\inter(\chase_{\infty}(\K))$ with $\range(\omega \setminus \omega') \subseteq \NE$.
\end{lemma}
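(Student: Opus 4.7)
The plan is to prove the claim by induction on the number of applications of \gealpha and \gebeta used to derive $\tup{Q'(\x, \aggCount(\y) \cdot \alpha), \Pi}$ starting from the initial query built at the initialization step of \perfectrefcount. \textbf{Base case.} Before any application of \gealpha or \gebeta, the only operations that can have been performed are \atomrewrite and \reduce; these preserve $\alpha = 1$ and only rewrite atoms in a body that remains over non-anonymous arguments. Thus any match $\omega'$ of $Q'$ in $\inter(\A)$ has range contained in $\NI$, i.e.\ it admits exactly one extension $\omega = \omega'$ with $\range(\omega \setminus \omega') \subseteq \NE$, matching the factor $\alpha = 1$.

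\textbf{Inductive step.} Suppose the statement holds for $\tup{Q(\x, \aggCount(\y) \cdot \alpha), \Pi}$, and consider an application of \gealpha (the case of \gebeta is analogous) that eliminates the aggregation variable $y$ occurring in some atom $R(w,y)$, yielding a family of queries with new factor $(n-i) \cdot \alpha$ for $i \in \{0, \dots, n-1\}$ and $n = \max_{B \in \B^1} \card(B,R)$. Fix one such output query $\tup{Q'', \Pi''}$ and a match $\omega'$ of it in $\inter(\A)$. By construction of the atomic decomposition $\atDec(\B^3, \B^2)$, the element $e = \omega'(w)$ lies in every $B \in \B^1$ (via the $\xi(B,w)$ atoms) and in no concept of $\sat_\T(\B^2)$. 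Together with the $\exists^{=i}_z R(w,z)$ constraint, this guarantees that in $\chase_\infty(\K)$ the element $e$ has exactly $i$ named $R$-successors, and that the chase has generated exactly $n - i$ additional anonymous $R$-successors for $e$, all of which are mutually distinct. Therefore, the variable $y$ (which no longer appears in $Q''$) can be mapped in exactly $n - i$ ways to anonymous $R$-successors of $e$, and by the inductive hypothesis the remaining aggregation variables of $Q$ can be extended in exactly $\alpha$ ways each. Composing these, one obtains exactly $(n-i) \cdot \alpha$ extensions of $\omega'$ into a match of $q$ in $\inter(\chase_\infty(\K))$ whose new images lie in $\NE$.

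Two subtle points must be checked to complete the step. First, that the choices above are disjoint: different pairs $\tup{\B^1, \B^2} \in \partition(\B_k)$ induce incompatible atomic decompositions (since $\sat_\T(\B^1) \cap \sat_\T(\B^2) = \emptyset$), and different values of $i$ are mutually exclusive because of the $\exists^{=i}$ atoms; thus no extension is counted twice across the output family. Second, that the $n-i$ anonymous successors are genuinely fresh and cannot be collapsed with successors generated for a different parent: this is where the restriction to \dllitecoreNmin (no role inclusions) is crucial, since without $\H$, an anonymous $R$-successor introduced by a chase step for $e$ is never shared with one introduced for another element, and the cardinality generated by the chase at $e$ is precisely the maximum $\card_\T(B,R)$ over concepts $B$ to which $e$ belongs, which is exactly $n$.

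The main obstacle is the disjointness argument across the output family: one has to verify that the combination of atomic decompositions and $\exists$-atoms introduced by \gealpha/\gebeta yields a partition of the possible ``anonymous shapes'' of extensions, so that the sum over the output queries equals the total count of extensions of $\omega'$. This follows by a case analysis on which concepts the image of $w$ belongs to in $\inter(\A)$, using completeness of the \atomrewrite/\reduce saturation (Lemma~\ref{lemma:completeness}) to argue that every relevant atomic decomposition is represented in the output.
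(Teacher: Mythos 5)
Your proof follows essentially the same route as the paper's: induction on the number of applications of \gealpha and \gebeta, using that \atomrewrite and \reduce never eliminate variables (so $\alpha=1$ in the base case, where soundness of the rewriting gives the unique trivial extension) and that the atomic decomposition together with the $\exists^{=i}$ atoms pins the anonymous contribution of each elimination step to $(n-i)\cdot\alpha$. The only remark is that disjointness of counts across the different output queries (and your appeal to Lemma~\ref{lemma:completeness}) is not needed for this per-query statement --- it is what the subsequent proposition handles --- but this does not affect the correctness of your argument.
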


\begin{proof}[Sketch]
By induction on the number of applications of \gealpha and \gebeta.  It uses
Lemma~\ref{lemma:soundness} and the fact that variables are never eliminated by
\atomrewrite or \reduce. The atomic decomposition in \gealpha and \gebeta
guarantees that all combinations of number restrictions are considered.
\end{proof}

% \jc{Here as well, for the sum to be correct, I guess we need the additional guarantee that we do not count something twice.
%   But I don't see how
%   this follows from the 3 previous lemmas.
% }
\begin{proposition}
  Consider a \dllitecoreNmin knowledge base $\K = \tup{\T, \A}$ and a connected, rooted CQ $q$.
  Let $\Q$ be the set of queries returned by a run of \perfectrefcount over $q$ and $\K$.
  Then:
  \[
    \textstyle
    \ans(\Q,\inter(\A)) = \certAns(q,\K)
  \]
\end{proposition}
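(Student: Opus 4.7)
The plan is to reduce the claim to a statement about matches in the canonical model $\can = \chase_\infty(\K)$ and then assemble the three preceding lemmas. Since $q$ is connected and rooted and $\T$ is a \dllitecoreNmin TBox, the earlier universal-model proposition gives $\certAns(q,\K) = \ans(q,\can)$. Thus it suffices to establish that $\ans(\Q,\inter(\A)) = \ans(q,\can)$, i.e., that summing over $Q' \in \Q$ the (appropriately weighted) matches for $Q'$ in $\inter(\A)$ reproduces the match multiset of $q$ in $\can$, grouped by the distinguished variables.

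To make the bookkeeping explicit, I would partition $\matches(q,\can)$ according to the \emph{anonymous profile} of a match $\omega$: which non-distinguished variables are sent to $\NE$, and which axioms $B \sqsubseteq\ \GEQ{n}{R}$ in $\T$ generate the corresponding anonymous elements during the chase. The design of \perfectrefcount is such that each profile class is captured by exactly one $Q' \in \Q$: the atomic decompositions $\atDec(\B^1,\B^2)$ produced by \gealpha and \gebeta partition the chase-triggering configurations on $\B_k$, while the exclusion on $\psi_\exists \wedge \exists^{=0}_y R(w,z)$ in the side condition of those rules prevents two queries from being generated for the same profile. Matches in which all variables remain in $\NI$ are captured by the initial query put in $\Q$ at the initialization of \perfectrefcount.

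For a fixed $Q' = \tup{Q(\x,\aggCount(\y)\cdot\alpha),\Pi}$, Lemma~\ref{lemma:soundness} ensures that each match $\omega'$ for $Q'$ in $\inter(\A)$ extends to a match $\omega$ for $q$ in $\can$, so no spurious answers are introduced; Lemma~\ref{lemma:completeness} ensures that every match $\omega \in \matches(q,\can)$ restricts, along its profile class, to a match $\omega'$ for the corresponding $Q'$ in $\inter(\A)$, so no match is missed; and Lemma~\ref{lemma:anonymous-contribution} gives that each such $\omega'$ extends into \emph{exactly} $\alpha$ matches $\omega$ whose newly-mapped variables land in $\NE$. Adding $\alpha \cdot |\matches(Q',\inter(\A))|$ over all $Q' \in \Q$ therefore rebuilds $|\matches(q,\can)|$ with the correct multiplicities. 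Because \perfectrefcount preserves distinguished variables, the same accounting holds pointwise for each tuple $\omega_{\dist(q)}$, which yields the required equality of answer multisets.

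The main obstacle is verifying that the profile-partition is genuine, i.e., that no match in $\can$ is double-counted across two queries of $\Q$, since count semantics is sensitive to multiplicities rather than just existence. Most of this bookkeeping is encapsulated in Lemma~\ref{lemma:anonymous-contribution} and in the disjointness of the atomic decompositions ($\sat_\T(\B^1) \cap \sat_\T(\B^2) = \emptyset$), but one must additionally check that \atomrewrite and \reduce do not collapse two distinct profile classes: the restrictions imposed on the unifier in \reduce (variables in $\x$ may map only to $\x$, variables in $\y$ only to $\x \cup \y$, and $\domain(\sigma),\range(\sigma) \subseteq \z_1 \cup \z_2$), together with the requirement in \atomrewrite that the rewritten variable be unbound and not a group-by variable, are precisely what guarantee this compatibility. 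Once these side conditions are confirmed to respect profile classes, combining the three lemmas yields the equality claimed.
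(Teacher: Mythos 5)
Your proposal is correct and follows essentially the same route as the paper: it combines Lemma~\ref{lemma:completeness} and Lemma~\ref{lemma:anonymous-contribution} (with Lemma~\ref{lemma:soundness} and the universal-model result used implicitly), and argues no double counting via the uniqueness of the query $Q'$ guaranteed by the $\exists^{=i}$ atoms, the atomic decompositions, and the side conditions on \gealpha, \gebeta, \atomrewrite and \reduce, which is exactly the paper's argument. Your ``anonymous profile'' partition is just a more explicit packaging of this same accounting, and the only slight imprecision (describing the \atomrewrite condition as the variable not being a group-by variable, where the paper requires it to be unbound and not an aggregation variable of the rule head) does not affect the argument.
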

\begin{proof}[Sketch]
  % The proposition can be derived from
  The claim follows from Lemmas~\ref{lemma:completeness}
  and~\ref{lemma:anonymous-contribution}, and by observing that the query $Q'$ in Lemma~\ref{lemma:anonymous-contribution}
  is unique due to $\exists^=i$ expressions, atomic decompositions, and the restrictions on \gealpha and \gebeta.
  Therefore, matches are not counted twice.
\end{proof}

The execution of \perfectrefcount does not depend on the ABox.  Considering
that the evaluation of \targetLanguage queries is in \LOGSPACE in data
complexity, this yields:
\begin{proposition}\label{prop:ptimeHard_rooted}
  \textsc{Count} is in \LOGSPACE in data complexity for $\dllitecoreNmin$ and
  rooted, connected CQs.
  % \davidebox{@Julien: Pls add proof sketch.}
\end{proposition}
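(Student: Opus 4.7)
The plan is to argue that the proposition follows almost directly from the correctness of \perfectrefcount together with the known \LOGSPACE data-complexity bound for evaluating counting queries over a relational structure. Given an input $\K=\langle\T,\A\rangle$ and a rooted, connected CQ $q$, I would first run \perfectrefcount on $q$ and $\T$ to obtain the finite set $\Q$ of \targetLanguage queries. Crucially, this rewriting step depends only on $q$ and $\T$, both of which are fixed parameters for data complexity, so it contributes only a constant (in $|\A|$) to the overall resource usage; in particular, the cardinality of $\Q$, the arities of its queries, the multiplicative factors $\alpha$, and the constants appearing in each $\exists^{=i}$ atom are all bounded by a function of $q$ and $\T$ only.

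Next, by the preceding proposition, $\ans(\Q,\inter(\A)) = \certAns(q,\K)$, so computing $\certCard(q,\K)$ reduces to evaluating the fixed set $\Q$ over the input interpretation $\inter(\A)$. Each query in $\Q$ is a fixed \targetLanguage expression, i.e., a first-order formula with counting (including negated atoms, equalities, and bounded nested $\exists^{=i}$ aggregation). By the result cited in Section~\ref{sec:rewrit-def} (\citeasnoun{GrMi96}, \citeasnoun{NKKK*19}), counting queries of this form can be evaluated in \LOGSPACE data complexity: the positive, negative and equality atoms yield a standard FO evaluation (already in \ACz $\subseteq$ \LOGSPACE), while each $\exists^{=i}_z\,P(w,z)$ requires counting the number of $P$-successors of a given element in $\inter(\A)$ and comparing it to the fixed integer $i$, both of which are in \LOGSPACE.

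Once each query $Q'\in\Q$ has been evaluated, the certain cardinality is obtained by summing, over the $Q'\in\Q$ and over the matches they return, the corresponding factors $\alpha_{Q'}$. Since $|\Q|$ and the $\alpha_{Q'}$ are fixed and the number of matches is polynomially bounded by $|\A|$, the total cardinality is bounded by a polynomial in $|\A|$, hence has a binary representation of length $O(\log |\A|)$; its computation and its comparison with the input $k$ (also encoded in binary and thus falling under the data-complexity measure of \citeasnoun{NKKK*19}) can be carried out by accumulating partial sums in logarithmic space.

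The main obstacle I anticipate is the verification that the \emph{combination} of constructs in \targetLanguage, in particular the nested $\exists^{=i}$ atoms together with the multiplicative factors applied after the aggregation, does not push evaluation beyond \LOGSPACE. This requires observing that, because the rewriting is done offline, the numbers $i$ appearing in $\exists^{=i}$ and the multiplicative factors $\alpha$ depend only on $\T$ and $q$ and are therefore constants from the data-complexity viewpoint, so each nested aggregation reduces to counting with respect to a fixed threshold, which is a standard \LOGSPACE operation. Once this is in place, the result follows from the fact that \LOGSPACE is closed under the constant-depth combinations induced by the fixed shape of the queries in $\Q$.
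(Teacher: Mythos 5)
Your argument is correct and follows essentially the same route as the paper, which justifies the proposition by observing that \perfectrefcount runs offline (depending only on $q$ and $\T$), that the preceding proposition gives $\ans(\Q,\inter(\A)) = \certAns(q,\K)$, and that evaluating the fixed \targetLanguage queries over $\inter(\A)$ is in \LOGSPACE in data complexity. Your additional details---treating the $\exists^{=i}$ thresholds and factors $\alpha$ as constants and accumulating the final sum in binary with logarithmically many bits---simply spell out what the paper leaves implicit.
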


\section{Conclusion and Perspectives}
\label{sec:conclusion}

\begin{table}[tbp]
  \setlength{\tabcolsep}{1em}
  \centering
  \small
  \begin{tabular}{@{}l@{}c@{\hspace{1ex}}c@{\hspace{1ex}}c@{\hspace{1ex}}c@{\hspace{1ex}}c@{\hspace{1ex}}c@{\hspace{1ex}}c@{\hspace{1ex}}c@{}}
    \toprule
    & \emph{\AQ,\CQCL}
    & \emph{\CQAC}
    & \emph{\CQCLR,\CQCR}
    & \emph{\CQAL}
    & \emph{\CQ}
    \\
    \midrule
    \dllitepos & \ir{P} & \kr{\coNP} &  \ir{L} & \nr{\coNP-c} & \coNP-c \\
    \dlliteposH &\kr{\coNP} & \nr{\coNP-c}  & \kr{\coNP} & \ir{\coNP-c} &\coNP-c \\
    \dlliteposHminNmin & \nr{\PTIME} & \kr{\coNP} &  \coNP & \ir{\coNP-c}  & \coNP-c\\
    \dllitecore & \kr{\coNP} & \kr{\coNP} & \ir{L}& \ir{\coNP-c} &  \coNP-c\\
    \dllitecoreNmin & \coNP& \coNP& \nr{L} & \ir{\coNP-c}& \coNP-c\\
    \dllitecoreH& \nr{\PTIME-h}/\coNP& \ir{\coNP-c} & \nr{\PTIME-h}/\coNP & \ir{\coNP-c} &  \coNP-c\\
    \bottomrule
  \end{tabular}
  \caption{Summary of complexity results ('-h' stands for '-hard', and '-c' for
    '-complete'). New
    bounds proved here are in blue, bounds that directly follow in green,
    and already known bounds in black.}
  \label{tab:results}
\end{table}

Table~\ref{tab:results} summarizes our results for data complexity of query answering under count semantics for variants of CQs and \dllite.
% We observe that all
% \coNP-hardness results, except the one for \dllitecoreHNRpos, are actually
% \coNP-completeness results by \cite{KoRe15}.
%
Among other observations, these results indicate that for certain DLs,
whether a CQ is connected and branching affects tractability.
An interesting open question in this direction is whether the
$\PTIME$-membership result for \dlliteposHminNmin and \AQ/\CQCL is tight.
Indeed, the P-hardness result provided for \AQ holds for a more expressive DL
(namely \dllitecoreH),
which allows for disjointness and arbitrary interactions between role
subsumption and existential quantification.

A main contribution of this work is the query rewriting
technique provided in Section~\ref{sec:rewritability}.  It shows that
for connected and rooted CQs, and for variants of \dllite with neither disjointness
nor role subsumption, rewritability into a variant of SQL with
aggregates can be regained.  An interesting open question is whether
rewritability still holds for rooted queries and \dllitecoreHminNmin,
i.e., when allowing for restricted role subsumption.

Finally, it must be emphasized that this work is mostly theoretical,
and does not deliver a practical algorithm for query answering under
count semantics over \dllite KBs. In particular, the definition of
data complexity that we adopted does not take into account the
cardinality restrictions that may appear in the TBox.  This is
arguable: in scenarios where these restrictions may encode statistics,
it is reasonable to consider that these numbers ``grow'' with the size
of the data. The rewriting defined in Section~\ref{sec:rewritability}
may produce a query whose size is exponential in such numbers
% (or polynomial if they are encoded in unary)
(when they are encoded in binary).  Therefore a natural continuation of this
work is to investigate how arithmetic operations and nested aggregation can be
used to yield a rewriting whose size does not depend on the numbers that appear
in cardinality restrictions.

%%% Local Variables:
%%% mode: latex
%%% TeX-master: "ms"
%%% End:

\section*{Acknowledgements}

This research has been partially supported
by the Wallenberg AI, Autonomous Systems and Software Program (WASP) funded by
the Knut and Alice Wallenberg Foundation,
by the Italian Basic Research (PRIN)
project HOPE, % ``Intelligent Open Data Exploration'' (HOPE),
by the EU H2020 project INODE, %``Intelligent Open Data Exploration'' (INODE),
grant agreement 863410,
by the CHIST-ERA project PACMEL,
% ``Process-aware Analytics Support based on Conceptual Models for Event Logs''
% (PACMEL).
and
by the project IDEE (FESR1133) through the European Regional Development
Fund (ERDF) Investment for Growth and Jobs Programme 2014-2020.

\bibliographystyle{named}
%\bibliography{string-medium,krdb,w3c,local-bib}
\bibliography{main-bib}

  \newpage
  \appendix
 \onecolumn
\section{Proofs}
\label{sec:proofs}
\subsection{Notation}
\paragraph{Matches for a subquery.}
If $\phi(\y)$ is a conjunction of atoms and $\I$ an interpretation,
we may use $\phi^\I$ below to denote the set of homomorphisms from $\phi$ to $\I$ (both viewed as sets of atoms).\\
For instance,
if
$$\phi_0(\y) = \ex{P_1}(y_1) \wedge \ex{P_2}(y_1, y_2)$$
and
$$\I = \{\ex{
  P_1(a),
  P_1(b),
  P_2(a,c),
  P_2(a,d)
  P_2(b,e)
  }\}$$,
  then
  \[\phi_0^\I = \left\{
      \begin{array}{lll}
        \{y_1 \mapsto \ex{a},& y_2 \mapsto \ex{c}\},\\
        \{y_1 \mapsto \ex{a},& y_2 \mapsto \ex{d}\},\\
        \{y_1 \mapsto \ex{b},& y_2 \mapsto \ex{e}\}
      \end{array} 
    \right\}\]
  In addition,
  if $\phi(\y)$ is a conjunction of atoms,
  $c_1, .., c_n \in \NI \cup \NV$ and $y_1, .., y_n \in \y$, 
  we use $\phi[y_1/c_1, ..,y_n/c_n]$ to designate the formula identical to $\phi$,
  but where each occurrence of $y_i$ is replaced by $c_i$,
  for $i \in [1 .. n]$.

  \noindent For instance, in the above example,
  $$\phi_0[y_1/\ex{a}] = \ex{P_1}(\ex{a}) \wedge \ex{P_2}(\ex{a}, y_2)$$
  and
  \[\phi_0[y_1/\ex{a}]^\I = \left\{
      \begin{array}{lll}
        \{y_2 \mapsto \ex{c}\},\\
        \{y_2 \mapsto \ex{d}\}\\
      \end{array} 
    \right\}\]
  
\paragraph{Empty mapping.}
We use $\varepsilon$ to denote the mapping with empty domain.

\subsection{Proposition~\ref{prop:hardness_disconnected}}
\textsc{Count} is \coNP-hard in data complexity for \dllitepos and acyclic, linear, but disconnected CQs (\CQAL).

  \begin{proof}
  We show that the reduction provided in the proof sketch above is correct.\\
  First, for readability, we partition the ABox and query used in this reduction.\\
  The ABox is defined as $\A = \A_1 \cup \A_2 \cup \A_3$,
  where:
    \begin{itemize}
    \item 
    $\A_1 = \{\ex{Vertex}(v)\mid v \in V\} \cup \{\ex{edge}(v_1, v_2)\mid (v_1, v_2) \in E\}$
  \item
    $\A_2 = \{\ex{Blue}(\ex{b}), \ex{Green}(\ex{g}), \ex{Red}(\ex{r})\}$,
  \item  
    $\A_3 = \{\ex{hasColor}(\ex{a}, \ex{b}), \ex{hasColor}(\ex{a}, \ex{g}), \ex{hasColor}(\ex{a}, \ex{r}),$ $\ex{edge}(\ex{a},\ex{a})\}$
    \end{itemize}
    \noindent For readability, we also reproduce the TBox:   
  $$\T = \{\ex{Vertex} \sqsubseteq \exists \ex{hasColor}, \exists \ex{hasColor}^-  \sqsubseteq \ex{Color}\}$$

    Then the query $q$ is defined as:
    $$q() \rdef \phi_0 \wedge \phi_b \wedge \phi_g \wedge \phi_r$$
    where:
    \begin{itemize}
    \item $\phi_0= \ex{Color}(c)$
    \item $\phi_b = \ex{edge}(v_1, v_2) \wedge \ex{hasColor}(v_1, c_1) \wedge \ex{hasColor}(v_2, c_2) \wedge \ex{Blue}(c_1) \wedge \ex{Blue}(c_2)$
    \item $\phi_g = \ex{edge}(v_3, v_4) \wedge \ex{hasColor}(v_3, c_3) \wedge \ex{hasColor}(v_4, c_4) \wedge \ex{Green}(c_3) \wedge \ex{Green}(c_4)$
    \item $\phi_r = \ex{edge}(v_5, v_6) \wedge \ex{hasColor}(v_5, c_5) \wedge \ex{hasColor}(v_6, c_6) \wedge \ex{Red}(c_5) \wedge \ex{Red}(c_6)$
    \end{itemize}
    Observe that the sets of variables occurring in $\phi_0, \phi_b,\phi_g$ and $\phi_r$ are pairwise disjoint.\\

  \noindent Observe that the KB $\K = \tup{\T,\A}$ is satisfiable.\\
  
   \noindent We now show that $\certCard(q, \K) \geq 4$ iff $\G$ is not 3-colorable.
    \begin{itemize}
   \item $(\Rightarrow)$
     By contraposition.\\
     Let $\G$ be 3-colorable.\\
     We show that there is a model $\I$ of $\K$ such that $\tup{\varepsilon, 3} \in \ans(q,\I)$.\\
     It follows that $\certCard(q,\K) < 4$.\\
     
     \noindent$\I$ is defined as follows:
     \begin{itemize}
     \item for $E \in \{\ex{Vertex, edge, Color, Blue, Green, Red}\}$,
       $E^\I = E^{\A}$,
     \item 
      each $v \in \ex{Vertex}^\I$ has a unique \ex{hasColor}-successor, among $\{$\ex{b, g, r}$\}$,
      and such that no adjacent vertices have the same successor;
      this is possible since $\G$ is 3-colorable (by assumption).
     \end{itemize}
     It can be verified that $\I$ is a model of $\tup{\T, \A}$.
     
     Then $\phi_b^\I = \{\{
     v_1 \mapsto \ex{a},
     v_2 \mapsto \ex{a},
     c_1 \mapsto \ex{b},
     c_2 \mapsto \ex{b}\}\}$,
    and similarly for $\phi_g^\I$ and $\phi_r^\I$ (with $\ex{g}$ and $\ex{r}$ respectively instead of $\ex{b}$).\\
    So $|\phi_b^{\I} \wedge \phi_g^{\I} \wedge \phi_r^{\I}|  = |\phi_b^{\I}| \cdot |\phi_g^{\I}| \cdot |\phi_r^{\I}| = 1 \cdot 1 \cdot 1 = 1$.\\
    Finally, $\phi_0^{\I} = \{\{c \mapsto \ex{b}\}, \{c \mapsto \ex{g}\}, \{c \mapsto \ex{r}\}\}$.\\
    So $|(\phi_0 \wedge \phi_b \wedge \phi_g \wedge \phi_r)^{\I}| = 3 \cdot 1 = 3$.\\
    Therefore $\ans(q,\I) = \{\tup{\varepsilon, 3}\}$.

  \item $(\Leftarrow)$
    Let $\G$ be non 3-colorable.\\
    We show below that for any model $\I$ of $\K$,
    there is a $k_\I \ge 4$ s.t. $= \{\tup{\varepsilon, k_\I}\} \in \ans(q,\I)$.\\
    It follows that $\certCard(q,\K) \geq 4$.

    Let $\I$ be a model of $\tup{\T, \A}$.\\
    Then $\{\{
     v_1 \mapsto \ex{a},
     v_2 \mapsto \ex{a},
     c_1 \mapsto \ex{b},
     c_2 \mapsto \ex{b}\}\} \in \phi_b^\I$,
    and similarly for $\phi_g^\I$ and $\phi_r^\I$ (with $\ex{g}$ and $\ex{r}$ respectively instead of $\ex{b}$).\\
    therefore $|\phi_b^\I| \geq 1$ (and similarly for $\phi_g^\I$ and $\phi_r^\I$).\\
    
    Now because $(\ex{Vertex} \sqsubseteq \exists \ex{hasColor}) \in \T$,
    each $v \in \ex{Vertex}^\I$ has a \ex{hasColor}-successor in $\Delta^I$.\\
    Then we have two cases:
    \begin{itemize}
    \item There is a $\ex{w} \in \ex{Vertex}^\I$ and a $\ex{c} \not\in \{\ex{b}, \ex{g},\ex{r}\}$ s.t. $\ex{hasColor}(\ex{w}, \ex{c}) \in \ex{hasColor}^\I$.\\
      Because $(\exists \ex{hasColor}^-  \sqsubseteq \ex{Color}) \in \T$,
      $\{\ex{Color}(\ex{c}), \ex{Color(b)},\ex{Color(g)},\ex{Color(r)}\} \subseteq \ex{Color}^\I$ must hold.\\
      So $\ex{Color}^I > 3$.\\
      Therefore $|\phi_0^\I| > 3$,
      and $\ans(q,\I) = \{\tup{\varepsilon,k_\I}\}$ for some $k_\I > 3 \cdot 1 \cdot 1 \cdot 1 = 3$.
    \item For each $v \in \ex{Vertex}^\I$, there is a $c \in \{\ex{b}, \ex{g},\ex{r}\}$ s.t. $\ex{hasColor}(v, c) \in \ex{hasColor}^I$.\\
      Because $\G$ is not 3-colorable,
       there must be $\ex{w}_1, \ex{w}_2 \in \ex{Vertex}^I$ and $\ex{c} \in \{\ex{b}, \ex{g},\ex{r}\}$ such that $\ex{edge}(\ex{w}_1, \ex{w}_2) \in \ex{edge}^\I$ and\\
       $\{\ex{hasColor}(\ex{w}_1, \ex{c}),\ex{hasColor}(\ex{w}_2, \ex{c})\} \subseteq \ex{hasColor}^\I$.\\
      % Let $\omega = (\ex{w}_1, \ex{w}_2, c, c)$.
       So at least one of the following holds:
       \[\{
       v_1 \mapsto \ex{w}_1,
       v_1 \mapsto \ex{w}_2,
       c_1 \mapsto \ex{c},
       c_2 \mapsto \ex{c}
       \} \in \phi_b^\I\],
       \[\{
       v_3 \mapsto \ex{w}_1,
       v_4 \mapsto \ex{w}_2,
       c_3 \mapsto \ex{c},
       c_4 \mapsto \ex{c}
       \} \in \phi_g^\I\]
     or
       \[\{
       v_5 \mapsto \ex{w}_1,
       v_6 \mapsto \ex{w}_2,
       c_5 \mapsto \ex{c},
       c_6 \mapsto \ex{c}
       \} \in \phi_r^\I\]

       % $\tu \in \phi_b^\I$ or $\tu \in \phi_g^\I$ or $\tu \in \phi_r^\I$.\\
     Therefore at least one of
     $|\phi_b^\I| \ge 2$,
     $|\phi_g^\I| \ge 2$ or
     $|\phi_r^\I| \ge 2$
     must hold.\\
     Since we know that 
     $|\phi_b^\I| \ge 1$,
     $|\phi_g^\I| \ge 1$, 
     $|\phi_r^\I| \ge 1$ and 
     $|\phi_0^\I| \geq 3$,
     we get:
     $$|\phi_0^\I| \cdot |\phi_b^\I| \cdot |\phi_g^\I| \cdot |\phi_r^\I| \ge 3 \cdot 2 \cdot 1 \cdot 1 = 6$$
      So there is  a $k_\I > 3$ s.t. $\ans(q,\I) = \{\tup{\varepsilon,k_\I}\}$.
    \end{itemize}
  \end{itemize}
\end{proof}

\subsection{Proposition~\ref{prop:hardness_branching}}

  \textsc{Count} is \coNP-hard in data complexity for \dlliteposH and acyclic,
  connected, but branching CQs (\CQAC).

 \begin{proof}
   Again, we adapt the reduction from (co-)3-colorability from~\cite{KoRe15},
so that the Gaifman graph of the query used in the reduction is acyclic,
 and has a single connected component.\\
 Let $G = \tup{V, E}$ be an undirected graph without self-loop (with vertices $V$ and edges $E$).\\
 We assume w.l.o.g. that $V$ is nonempty.\\
 
 We define the KB $\K = \tup{\T,\A}$ as follows.
    \begin{itemize}
    \item $\A = \A_1 \cup \A_2 \cup \A_3 \cup \A_4 \cup \A_5$, where:
    \begin{itemize}
     \item 
       $\A_1 = \{\ex{Vertex}(v)\mid v \in V\} \cup \{\ex{edge}(v_1, v_2)\mid (v_1, v_2) \in E\}$
     \item
       $\A_2= \bigcup\limits_{v \in V}\{\ex{conn}(v,\ex{a}), \ex{conn}(\ex{a},v)\}$
   \item
     $\A_3 = \{\ex{s}(v,c) \mid v \in V \text{ and } c \in \{\ex{b, g, r}\}\}$
   \item  
     $\A_4 = \{\ex{edge}(\ex{a, a}), \ex{conn}(\ex{a,a}), \ex{hasColor}(\ex{a, b}), \ex{hasColor}(\ex{a, g}), \ex{hasColor}(\ex{a, r})\}$
   \item
     $\A_5 = \{\ex{Blue}(\ex{b}), \ex{Green}(\ex{g}), \ex{Red}(\ex{r})\}$
     \end{itemize}
   \item $\T = \T_1 \cup \T_2$, where:
     \begin{itemize}
     \item 
       $\T_1 = \{\ex{Vertex} \sqsubseteq \exists \ex{hasColor}\}$
     \item $\T_2 = \{\ex{s} \sqsubseteq \ex{u}, \ex{hasColor} \sqsubseteq \ex{u}\}$
   \end{itemize}
 \end{itemize}
 \noindent Observe that the KB $\K = \tup{\T,\A}$ is satisfiable.\\
 
 \noindent We now define the query $q$ as: 
$$q() \rdef \phi_{\ex{Blue}} \wedge \phi_{\ex{Green}} \wedge \phi_{\ex{Red}}  \wedge \phi_{\ex{conn}}  \wedge \phi_{\ex{u}}$$
where:
    \begin{flalign*}
    \phi_{\ex{Blue}} &= \ex{edge}(v_1, v_2) \wedge \ex{hasColor}(v_1, c_1) \wedge \ex{hasColor}(v_2, c_2) \wedge \ex{Blue}(c_1) \wedge \ex{Blue}(c_2)\\
    \phi_{\ex{Green}} &= \ex{edge}(v_3, v_4) \wedge \ex{hasColor}(v_3, c_3) \wedge \ex{hasColor}(v_4, c_4) \wedge \ex{Green}(c_3) \wedge \ex{Green}(c_4)\\
    \phi_{\ex{Red}} &= \ex{edge}(v_5, v_6) \wedge \ex{hasColor}(v_5, c_6) \wedge \ex{hasColor}(v_6, c_6) \wedge \ex{Red}(c_5) \wedge \ex{Red}(c_6)\\
    \phi_{\ex{conn}} &= \ex{conn}(v_1, v_3) \wedge \ex{conn}(v_3, v_5)\\
    \phi_{\ex{u}} &= \ex{s}(v_7, c_1) \wedge \ex{u}(v_7, c_7) 
    \end{flalign*}
  Observe that the Gaifman graph of $q$ is a tree.\\

  \noindent For readability, we will use $\phi'$ below as a shortcut to denote (the conjunction of) the first 4 subformulas of $q$,
  i.e.:
  $$\phi' \doteq \phi_{\ex{Blue}} \wedge \phi_{\ex{Green}} \wedge \phi_{\ex{Red}}  \wedge \phi_{\ex{conn}}$$
  so that
  $q$ can alternatively be defined as:
  $$q() \rdef \phi' \wedge \phi_{\ex{u}}$$
  Observe that $c_1$ is the only variable shared by $\phi'$ and $\phi_{\ex{u}}$.\\
  
  \noindent We will also use $\omega_{\ex{a}}$ to denote the following soluton mapping:
     \[
     \omega_{\ex{a}} \doteq 
       \left\{
       \begin{array}{llll}
     v_1 \mapsto \ex{a},& v_2 \mapsto \ex{a},& c_1 \mapsto \ex{b},& c_2 \mapsto \ex{b},\\
     v_3 \mapsto \ex{a},& v_4 \mapsto \ex{a},& c_3 \mapsto \ex{g},& c_4 \mapsto \ex{g},\\
     v_5 \mapsto \ex{a},& v_6 \mapsto \ex{a},& c_5 \mapsto \ex{r},& c_6 \mapsto \ex{r}
       \end{array}
     \right\}
   \]

  We now show that $\certCard(q,\K) \geq 3 \cdot |V| + 1$ iff $G$ is not 3-colorable.\\
   
   \begin{itemize}
    \item $(\Rightarrow)$
      By contraposition.\\
      Let $G$ be 3-colorable.\\
     We show that there is a model $\I$ of $\K$ such that $\tup{\varepsilon, 3 \cdot |V|} \in \ans(q,\I)$.\\
      It follows that $\certCard(q,\K) < 3 \cdot |V| + 1$.\\
      
    \noindent $\I$ is defined as follows:
      \begin{itemize}
      \item for $E \in \{\ex{Vertex, edge, hasColor, s, conn, Blue, Green, Red}\}$,
      $E^\I = E^{\A}$,
     \item $\ex{u}^\I = \ex{hasColor}^\I \cup \ex{s}^\I$,
    \item 
      each $v \in (\ex{Vertex}^\I)$ has a unique \ex{hasColor}-successor,
      among $\ex{\{b, g, r\}}$,
       such that no adjacent vertices have the same successor;
       this is possible since $G$ is 3-colorable (by assumption).
      \end{itemize}
    It can be verified that $\I$ is a model of $\K$.\\

    Then
    $\phi_{\ex{Blue}}^\I = \{\{
     v_1 \mapsto \ex{a},
     v_2 \mapsto \ex{a},
     c_1 \mapsto \ex{b},
     c_2 \mapsto \ex{b}
     \}\}$,
    and similarly for $\phi_{\ex{Green}}^\I$ and $\phi_{\ex{Red}}^\I$ (with $\ex{g}$ and $\ex{r}$ respectively instead of $\ex{b}$).\\
      Observe also that
    $\{
     v_1 \mapsto \ex{a},
     v_3 \mapsto \ex{a},
     v_5 \mapsto \ex{a}
     \}
     \in \phi_{\ex{conn}}^\I$.\\
     It follows that $(\phi')^\I = \{\omega_{\ex{a}}\}$.\\
In addition,
$\phi_{\ex{u}}[c_1/\ex{b}]^{\I} =  \{ \{v_7 \mapsto v, c_7 \mapsto c\} \mid v \in V, c \in \{\ex{b,g,r}\}\}$.\\
Then because $c_1$ is the only variable shared by $\phi'$ and $\phi_{\ex{u}}$,
and because $\omega_{\ex{a}}(c_1) = \ex{b}$,
we get:
\begin{flalign*}
(\phi' \wedge \phi_{\ex{u}})^\I & = \{\omega_{\ex{a}} \cup \omega' \mid \omega' \in  \phi_{\ex{u}}[c_1/\ex{b}]^\I\}\\
|(\phi' \wedge \phi_{\ex{u}})^\I| & =  |\phi_{\ex{u}}[c_1/\ex{b}]^\I|\\
|(\phi' \wedge \phi_{\ex{u}})^\I| & =  3 \cdot |V|
\end{flalign*}
So $\tup{\varepsilon, 3 \cdot |V|} \in \ans(q,\I)$.

   \item $(\Leftarrow)$
     Let $G$ be non 3-colorable.\\
     We show below that for any model $\I$ of $\K$, $|(\phi' \wedge \phi_{\ex{u}})^\I| \geq 3 \cdot |V| + 1$.\\
    It follows that $\certCard(q,\K) \geq 3 \cdot |V| + 1$.\\

     Let $\I$ be a model of $\K$.\\
     Then
    $\{
     v_1 \mapsto \ex{a},
     v_2 \mapsto \ex{a},
     c_1 \mapsto \ex{b},
     c_2 \mapsto \ex{b}
     \}\in \phi_{\ex{Blue}}^\I$,
    and similarly for $\phi_{\ex{Green}}^\I$ and $\phi_{\ex{Red}}^\I$ (with $\ex{g}$ and $\ex{r}$ respectively instead of $\ex{b}$).\\
      In addition,
    $\{
     v_1 \mapsto \ex{a},
     v_3 \mapsto \ex{a},
     v_5 \mapsto \ex{a}
     \}
     \in \phi_{\ex{conn}}^\I$.\\
It follows that  $\omega_{\ex{a}} \in (\phi')^\I$.\\
     
Now let $\Omega_{\ex{u}} = \{ \{v_7 \mapsto v, c_7 \mapsto c\} \mid v \in V, c \in \{\ex{b,g,r}\}\}$.\\
Then $|\Omega_{\ex{u}}| = 3 \cdot |V|$.\\
     In addition, because $\A_3 \subseteq \I$ and $(\ex{s} \sqsubseteq \ex{u}) \in \T$,
     it must be the case that $\Omega_{\ex{u}} \subseteq \phi_{\ex{u}}[c_1/\ex{b}]^{\I}$.\\
     
     % So $|\phi_{\ex{u}}[c_1/\ex{b}]^{\I}| = 3 \cdot |V|$.\\ 

     Finally,
     because $(\ex{Vertex} \sqsubseteq \exists \ex{hasColor}) \in \T$,
     each $v \in V^\I$ has a \ex{hasColor}-successor in $\Delta^I$.\\

     Then we have two cases:
     \begin{itemize}
    \item There is a $w \in \ex{Vertex}^\I$ and a $c \not\in \{\ex{b}, \ex{g},\ex{r}\}$ s.t. $\ex{hasColor}(w, c) \in \ex{hasColor}^\I$.\\
     % \item There is a $\ex{w} \in \ex{Vertex}^\I$ s.t. $c_{\ex{w}} \notin \{\ex{b, g, r}\}$.
       % Then $|\ex{u}^I| > 3 \cdot |V| + 1$\\
       % Let us focus on the evaluation of $\phi_{\ex{u}}$ in $\I$,
       % but restricted to matches that map $c_1$ to \ex{r}.\\
       % Because $\A_3 \subseteq \I$,
       % for every vertex $v \in V$, and for every $c' \in \{\ex{b, g, r}\}$,
       % the mapping $\{v_7 \mapsto v, c_7 \mapsto c'\}$ is an answer to $\phi_{\ex{u}}[c_1/\ex{r}]$ in $\I$.\\
       % So $|\phi_{\ex{u}}[c_1/\ex{r}]^\I| \geq 3 \cdot |V|$.\\
      Then $\omega = \{v_7 \mapsto w, c_7 \mapsto c\} \in \phi_{\ex{u}}[c_1/\ex{b}]^\I$.\\
      And since $\Omega_{\ex{u}} \subseteq \phi_{\ex{u}}[c_1/\ex{b}]^\I$ also holds,
      $\Omega_{\ex{u}} \cup \{\omega\} \subseteq \phi_{\ex{u}}[c_1/\ex{b}]^\I$.\\
      Observe also that $\omega \not\in \Omega_{\ex{u}}$.\\
      Therefore:
      \begin{flalign}
        |\phi_{\ex{u}}[c_1/\ex{b}]^\I|& \geq |\Omega_{\ex{u}} \cup \{\omega\}|\\
        |\phi_{\ex{u}}[c_1/\ex{b}]^\I|& \geq 3 \cdot |V| + 1\label{eq:a31}
      \end{flalign}
      Now recall that $\omega_{\ex{a}} \in (\phi')^\I$, that $\omega_{\ex{a}}(c_1) = \ex{b}$,
      and that $c_1$ is the only variable shared by $\phi'$ and $\phi_{\ex{u}}$.\\
      It follows that:
      \begin{flalign}
      (\phi' \wedge \phi_{\ex{u}})^\I &\supseteq \{\omega_{\ex{a}} \cup \omega' \mid \omega' \in \phi_{\ex{u}}[c_1/\ex{b}]^\I\}\\
      |(\phi' \wedge \phi_{\ex{u}})^\I| &\geq |\{\omega_{\ex{a}} \cup \omega' \mid \omega' \in \phi_{\ex{u}}[c_1/\ex{b}]^\I\}|\\
      |(\phi' \wedge \phi_{\ex{u}})^\I| &\geq |\phi_{\ex{u}}[c_1/\ex{b}]^\I|\label{eq:a32}
    \end{flalign}
    So from~\ref{eq:a31} and~\ref{eq:a32}:
      \begin{flalign*}
      |(\phi' \wedge \phi_{\ex{u}})^\I| &\geq 3 \cdot |V| + 1
    \end{flalign*}

    \item For each $v \in \ex{Vertex}^\I$, there is a $c \in \{\ex{b}, \ex{g},\ex{r}\}$ s.t. $\ex{hasColor}(v, c) \in \ex{hasColor}^I$.\\
      Because $G$ is not 3-colorable,
       there must be $w_1, w_2 \in \ex{Vertex}^I$ and $c \in \{\ex{b}, \ex{g},\ex{r}\}$ such that $\ex{edge}(w_1, w_2) \in \ex{edge}^\I$ and\\
       $\{\ex{hasColor}(w_1, c),\ex{hasColor}(w_2, c)\} \subseteq \ex{hasColor}^\I$.\\

       W.l.o.g., let us assume that $c = \ex{b}$ (the proofs for $c = \ex{g}$ and $c = \ex{g}$ are identical).\\
       Then $\{
       v_1 \mapsto w_1,
       v_2 \mapsto w_2,
       c_1 \mapsto \ex{b},
       c_2 \mapsto \ex{b}\} \in \phi_{\ex{Blue}}^\I$.\\
       
    In addition:
    \begin{flalign*}
    &\{
     v_1 \mapsto \ex{a},
     v_2 \mapsto \ex{a},
     c_1 \mapsto \ex{g},
     c_2 \mapsto \ex{g}
     \} \in \phi_{\ex{Green}}^\I\\
    &\{
     v_1 \mapsto \ex{a},
     v_2 \mapsto \ex{a},
     c_1 \mapsto \ex{r},
     c_2 \mapsto \ex{r}
     \} \in \phi_{\ex{Red}}^\I\\
     &\{
     v_1 \mapsto w_1,
     v_3 \mapsto \ex{a},
     v_5 \mapsto \ex{a}
     \}
     \in \phi_{\ex{conn}}^\I
    \end{flalign*}
     It follows that $\omega_{w_1} \in (\phi')^\I$,
     where $\omega_{w_1}$ is the mapping defined by:
     \[
     \omega_{w_1} \doteq 
       \left\{
       \begin{array}{llll}
     v_1 \mapsto w_1,& v_2 \mapsto w_2,& c_1 \mapsto \ex{b},& c_2 \mapsto \ex{b},\\
     v_3 \mapsto \ex{a},& v_4 \mapsto \ex{a},& c_3 \mapsto \ex{g},& c_4 \mapsto \ex{g},\\
     v_5 \mapsto \ex{a},& v_6 \mapsto \ex{a},& c_5 \mapsto \ex{r},& c_6 \mapsto \ex{r}
       \end{array}
     \right\}
   \]

   Recall also that $\omega_{\ex{a}} \in (\phi')^\I$.\\
      Because $c_1$ is the only variable shared by $\phi'$ and $\phi_{\ex{u}}$,
      and because $\omega_{w_1}(c_1) = \omega_{\ex{a}}(c_1) = \ex{b}$,
      we have:
      \begin{flalign}
        (\phi' \wedge \phi_{\ex{u}})^\I &\supseteq \{\omega_{w_1} \cup \omega' \mid \omega' \in \phi_{\ex{u}}[c_1/\ex{b}]^\I\} \cup \{\omega_{\ex{a}} \cup \omega' \mid \omega' \in \phi_{\ex{u}}[c_1/\ex{b}]^\I\}\label{eq:34}
      \end{flalign}
      Now observe that $\domain(\omega_{w_1}) =  \domain(\omega_{\ex{a}})$ and $\omega_{w_1} \neq \omega_{\ex{a}}$.\\
      Therefore for any set $\Omega'$ of mappings,
      $\{\omega_{w_1} \cup \omega' \mid \omega' \in \Omega'\} \cap \{\omega_{\ex{a}} \cup \omega' \mid \omega' \in \Omega'\} = \emptyset$.\\
      So from~\ref{eq:34}, we get:
      \begin{flalign}
      |(\phi' \wedge \phi_{\ex{u}})^\I| &\geq 2 \cdot |\phi_{\ex{u}}[c_1/\ex{b}]^\I|\label{eq:35}
    \end{flalign}
   Finally, recall that $\Omega_{\ex{u}} \subseteq \phi_{\ex{u}}[c_1/\ex{b}]^\I$ and $|\Omega_{\ex{u}}| = 3 \cdot |V|$.\\
      So from~\ref{eq:35}, we get:
      \begin{flalign*}
      |(\phi' \wedge \phi_{\ex{u}})^\I| &\geq 2 \cdot |\Omega_{\ex{u}}|\\
      |(\phi' \wedge \phi_{\ex{u}})^\I| &\geq 6 \cdot |V|\\
      |(\phi' \wedge \phi_{\ex{u}})^\I| &\geq 3 \cdot |V| + 1
    \end{flalign*}
     \end{itemize}
     \end{itemize}
    \end{proof}

%%% Local Variables:
%%% mode: latex
%%% TeX-master: "ms"
%%% End:

\end{document}

%%% Local Variables:
%%% mode: latex
%%% TeX-master: t
%%% End: